\newtheorem{theorem}{Theorem}
\newtheorem{lemma}{Lemma}
\newtheorem{remark}{Remark}
\newtheorem{example}{Example}
\DeclareMathOperator{\NM}{NM}
\begin{document}
	
	\title{\textbf{THE MUSEUM PASS PROBLEM WITH CONSORTIA}\thanks{Juan Carlos Gon\c{c}alves-Dosantos acknowledges the grant PID2021-12403030NB-C31 funded by MCIN/AEI/10.13039/501100011033 and by ``ERDF A way of making Europe/EU''. Ricardo Mart\'inez acknowledges the R\&D\&I project grant PID2023-147391NB-I00 funded by MCIN AEI/10.13039/501100011033 and by ``ERDF A way of making Europe/EU''. Joaqu\'in S\'anchez-Soriano acknowledges the grant PID2022-137211NB-I00 funded by MCIN/AEI/10.13039/501100011033 and by ``ERDF A way of making Europe/EU'' and from the Generalitat Valenciana under project PROMETEO/2021/063.}}
	\author{Juan Carlos Gon\c{c}alves-Dosantos\thanks{email: jgoncalves@umh.es} \\ Universidad Miguel Hern\'andez de Elche \and Ricardo Mart\'inez \\ Universidad de Granada \and Joaqu\'in S\'anchez-Soriano \\ Universidad Miguel Hern\'andez de Elche}
	\date{}
	\maketitle
	\begin{abstract}
		In this paper, we extend the museum pass problem to incorporate the market structure. To be more precise, we consider that museums are organized into several pass programs or consortia. Within this framework, we propose four allocation mechanisms based on the market structure and the principles of proportionality and egalitarianism. All these mechanisms satisfy different reasonable properties related to fairness and stability which serve to axiomatically characterize them.
	\end{abstract}
	\textbf{Keywords}: museum pass, consortia, allocation rule, equal treatment, proportionality

	\newpage
	\section{Introduction}
	
	\cite{Ginsburgh01} introduced the intriguing museum pass problem. The crux of this problem lies in how to fairly distribute the revenue generated from selling museum passes that grant access to a group of participating museums.\footnote{\cite{CasasMendez11} provides a good survey of the literature on the museum pass problem.} This issue extends beyond this particular application (see \cite{Algaba19}) and finds relevance in various real-life contexts where bundling products into packages proves more lucrative than selling them individually (\cite{Adams76}). In recent years, digital streaming platforms such as Spotify or Kindle have experienced significant growth potential. These platforms typically offer users a catalog of services at a fixed subscription price, allowing customers to choose the content or service they wish to consume. The challenge arises when these platforms need to compensate content producers (artists or writers) based on the consumption of each offered service. Previous research has examined the museum pass problem, including works by \cite{Martinez23}, \cite{Bergantinos15}, \cite{CasasMendez11}, \cite{Estevez-Fernandez12}, \cite{Wang11},  and \cite{Ginsburgh03}. However, these studies assume the existence of a single global pass in which all museums participate, excluding other intermediate self-organizing structures.\footnote{\cite{Bergantinos15} consider that pass holders may use the general pass or purchase tickets for individual museums, but intermediate structures are obviated.} Consider a city that offers tourists a pass granting access to all its monuments and museums. Typically, some of these monuments (such as cathedrals or monasteries) are owned by the local Church, while public museums coexist with private ones. Although the church or a consortium of private museums may participate in the city pass, they sometimes also offer their own passes restricted to the monuments and museums they manage. This more complex organizations must be considered, not only to allocate revenue from the sub-programs, but also to distribute the revenues obtained from selling the global pass. The novelty of this paper lies in incorporating intermediate market structures into the design of allocation mechanisms for the museum pass problem. The relationship between our model and findings and several of the previously mentioned works is elaborated upon in the final section.
	
	In our model, a \emph{problem} is described by five elements: the set of \emph{museums} that may participate in various programs, the set of \emph{consortia} representing the market structure, the sets of \emph{pass holders} who purchase each of the available passes, the list of \emph{pass prices} granting access to the museums involved in each program, and the list of \emph{consumption matrices} that indicate the museums visited by each pass holder. A \emph{rule} is a method to distribute among the museums the revenue obtained from selling all the passes.
	
	For our analysis, we adopt the axiomatic approach, which has a well-established tradition in economics literature dating back to \cite{Arrow51}. Instead of directly choosing from existing rules or alternatives, this methodology advocates selecting rules based on the axioms (or properties) they satisfy. In particular, the axioms we analyze in this paper implements different notions of fairness and stability. In the first group we include: \emph{dummy} (museums without visitors do not receive anything), \emph{symmetry within consortia} and \emph{symmetry between consortia}, which guarantee equitable treatment for museums and consortia with symmetric features. With regard stability, we study the property of \emph{composition} that states that the revenue-sharing process can be conducted in multiple stages without impacting the final allocation. In addition, we consider three axioms (\emph{splitting-proofness of museums}, \emph{splitting-proofness of consortia}, and \emph{the consortia property}) that guarantee that no museum or consortium may alter the allocation of other museums or consortia by artificially splitting their offer (segregating, for example, the main nave of a cathedral from its adjacent tower, or detaching different buildings of a same museum).
	
	With regard to the rules, the ones we propose are based on four key principles for revenue distribution. First, the consortium structure of museums should be reflected in the distribution process. We do that by considering two-stage mechanisms, where revenue is initially distributed among consortia and subsequently among museums within each consortium.  Second, only museums visited by a pass holder should receive a portion of the price paid for that pass, as they are the ones that attracted the pass holder's interest. The final two principles are equality and proportionality, which are widely used in practice and generally accepted as fair allocation methods. Thus, in the \emph{egalitarian-egalitarian rule} the pass of each pass holder is firstly split among the visited consortia, and in the second stage, each consortium's allocation is further divided among the museums that comprise it. In contrast, in the \emph{proportional-proportional rule} the the pass of each pass holder is initially divided among the consortia the pass holder visits, in proportion to the consortium prices, and then, within each consortium, this share is further proportionally distributed among the visited museums based on their individual prices. The other two allocation methods we propose, the \emph{egalitarian-proportional} and \emph{proportional-egalitarian rules} are crossed combinations of the two previous ones, mixing egalitarianism and proportionality. 
	
	In our main results, we establish normative foundations for the four aforementioned rules. We demonstrate that, among all possible revenue-distribution mechanisms, the egalitarian-egalitarian rule is the unique method that satisfies composition, dummy, and both symmetries (Theorem \ref{thm2}). Interestingly, we find out that if we replace fairness (symmetries) with non-manipulability (splitting-proofness and consortia property), the proportional-proportional rule is characterized instead (Theorem \ref{thm3}). Given these results, one might wonder whether fairness and non-manipulability are compatible. We show that the answer is affirmative (Theorems \ref{thm4} and \ref{thm5}).
	
	The rest of the paper is organized as follows. In section 2 we present the model and introduce the main rules we analyze. In Section 3 we propose several axioms that are suitable for this setting. In Section 4 we present our characterization results. Finally, in Section 5 we conclude with some final remarks. 
	
	\section{The consortia model}
	
	Let $\mathbb{M}$ represent the set of all potential museums and let $\mathcal{M}$ be the set of all finite (non-empty) subsets of $\mathbb{M}$. Now, let $\mathbb{N}$ represent the set of all potential buyers of a museum pass. Let $\mathcal{N}$ be the set of all finite (non-empty) subsets of $\mathbb{N}$. We denote by $\mathcal{P}^M$ the set of all possible partitions of $M$ (for any $P=\{P_1,\dots, P_s\} \in \mathcal{P}^M$ and any $P_k,P_l\in P$ we have $P_k\cap P_l=\varnothing$ and $\bigcup_{k=1}^{s}P_k=M$). A \textbf{museum pass problem with (a priori) consortia}, or simply a \textbf{problem}, is a 5-tuple $D=\left(M,P,N,\pi,C\right)$, where:\footnote{See Example \ref{example1} for an illustration of all the elements of the problem.}
	\begin{itemize}
		\item $M=\{1,\ldots,m\} \in \mathcal{M}$ is the set of \textbf{museums}.
		\item $P=\{P_1,\dots, P_s\} \in \mathcal{P}^M$ is the set of \textbf{consortia} that represent the market structure. 
		\item $N=\{N^{-m},\ldots,N^{-1},N^0,N^1,\ldots,N^s\} \in \mathcal{N}$ is the set of \textbf{pass holders}. Negative superindices refer to individual passes, zero superindex refers to the general pass, while positive superindices refer to consortium passes. Thus, for each $i \in M$, $N^{-i}$ indicates the pass holders that buy (and visit) the individual pass for museum $i$. Similarly, for each consortium $P_t \in P$, $N^{t}$ specifies the set of pass holders that buy the pass to visit all or some museums in the consortium $P_t$. Finally, $N^0$ indicates the pass holder that purchase the pass the allows the entrance to any museum in $M$. Notice that, for a given $\hat{N} \in N$, it may occur that $\hat{N} = \varnothing$ if no consumer buys the pass to enter to the corresponding museum or group of museums. For the sake of simplicity, if a consortium is formed by just one museum ($P_t=\{i\}$), we impose that $N^{-i}=\varnothing$. By doing this we avoid considering singletons twice. Moreover, we assume that the same person cannot buy two different passes; if this were the case, we would consider as many different pass holders as passes he or she had purchased.\footnote{We assume that each pass holder only belongs to a unique $\hat{N} \in N$, that is, for any pair $\hat{N},\overline{N} \in N$, $\hat{N} \cap \overline{N} = \varnothing$.}
		\item $\pi=\{\pi^{-m}, \ldots, \pi^{-1},\pi^0,\pi^1,\ldots,\pi^s\} $ is the set of \textbf{pass prices} where $\pi^{l}\in\mathbb{R}_{++}$ for all $l\in\{-m,\dots,-1,0,1,\dots,s\}$. Using the same convention as before, the first elements $\pi^{-m}, \ldots, \pi^{-1}$ indicate the pass price of each individual museum, $\pi^0$ is the price of the general pass, while $\pi^1,\ldots,\pi^s$ indicate the prices to visit the consortia. Again, for the sake of simplicity, if a consortium is formed by just one museum ($P_t=\{i\}$), we impose that $\pi^{-i}=\pi^t$.
		\item $C=\{C^{-m},\ldots,C^{-1},C^0,C^1,\ldots,C^s\}$ is the list of \textbf{consumption matrices}. The rows of all these matrices are labeled with the name of the museums and the columns with the name of the pass holders. Furthermore, all of these matrices are binary, where a $1$ in cell $ia$ means that museum $i$ has been visited by pass holder $a$, and $0$ otherwise. The first $m$ matrices $\{C^{-m},\ldots,C^{-1}\}$ refer to the consumption of individual museums. Each $C^{-i}$ has one row (for museum $i$) and $|N^{-i}|$ columns (one for each pass holder that buys that individual pass). The matrix $C^0$ refers to the consumption of the global pass. It has $m$ rows (one for each museum) and $|N^0|$ columns (one for each pass holder that buys the global pass). The last matrices $\{C^1,\ldots,C^s\}$ refer to the consumption of consortia. Each $C^{t}$ has $|P^t|$ rows (one for each museum in the consortium $P^t$) and $|N^{t}|$ columns (one for each pass holder that buys the corresponding consortium pass). For each $\hat{C} \in C$,
		$$
		\hat{C}_{ia}=
		\begin{cases}
			1 & \text{if museum $i$ has been visited by pass holder $a$} \\
			0 & \text{otherwise}
		\end{cases}
		$$
		For the sake of convention, if a pass has not been purchased by any pass holder ($\hat{N}=\varnothing$), the associated consumption matrix would have zero columns; in such a case we write $\hat{C}=\varnothing$. As in \cite{Ginsburgh03} and \cite{Bergantinos15}, we assume that any pass holder visits, at least, one museum.
	\end{itemize}
	
	The domain of all possible problems $(M,P,N,\pi,C)$ is denoted by $\mathcal{D}$. The total revenue obtained from selling the passes is given by
	$$
	E = \sum_{i=1}^m |N^{-i}|\pi^{-i} + |N^0|\pi^0 + \sum_{t=1}^s |N^t|\pi^{t},
	$$
	where $\sum_{i=1}^m |N^{-i}|\pi^{-i}$ is the revenue from individual passes ($|N^{-i}|\pi^{-i}$ corresponds to the revenue generated by museum $i$), $|N^0|\pi^0$ is the revenue from the global pass, and $\sum_{t=1}^s |N^t|\pi^{t}$ is the revenue from consortium passes ($|N^t|\pi^{t}$ corresponds to the revenue generated by consortium $t$).
	
	Now, we introduce some notation which we will use in the rest of the paper:
	\begin{itemize}
		\item For each $i \in M$ and each $\sigma \in\{-m,\dots,-1,0,1,\dots,s\}$, $N^\sigma_i =\{a \in N^\sigma: C^\sigma_{ia}=1\}$ is the set of pass holders in $N^\sigma$ that visit museum $i$. 
		\item For each $i \in M$, $P^{(i)}$ is the consortium $i$ belongs to, i.e., $P^{(i)} \in P$ is the unique consortium such that $i \in P^{(i)}$. Similarly, $N^{(i)}$ and $C^{(i)}$ are the set of pass holders and consumption matrix associated with consortium $P^{(i)}$.
		\item For each $a \in N$, $M_a =\{i \in M: \hat{C}_{ia}=1 \text{ for some } \hat{C} \in C\}$ is the set of museum visited by pass holder $a$. 
		\item For each $a \in N^0$, $K^0_{a}=\{k\in \{1,\dots,s\}: C^0_{ia}=1 \text{ for some } i \in P_k \}$ is the set of consortia visited (with a general pass) by $a$.
		\item For each $\hat{C} \in C$ and each $i \in M$, we define $\hat{C}_{i \cdot}=\sum_{a \in \hat{N}} \hat{C}_{ia}$ as the number of visitors of $i$ in the consumption matrix $\hat{C}$.
		\item Let $\mathcal{D}^0 \subset \mathcal{D}$ be the subclass of problems in which pass holders only buy general passes (i.e. $(M,P,N,\pi,C) \in \mathcal{D}^0$ iff $\hat{N} = \varnothing$ for any $\hat{N} \in N \backslash N^0$).
		\item For each $P_t \in P$, let $\mathcal{D}^t \subset \mathcal{D}$ be the subclass of problems in which pass holders only buy the pass to access museums in the consortium $t$ (i.e. $(M,P,N,\pi,C) \in \mathcal{D}^t$ iff $\hat{N} = \varnothing$ for any $\hat{N} \in N \backslash N^t$).
		\item For each $i \in M$, let $\mathcal{D}^{-i} \subset \mathcal{D}$ be the subclass of problems in which pass holders only buy passes to access museum $i$ (i.e. $(M,P,N,\pi,C) \in \mathcal{D}^{-i}$ iff $\hat{N} = \varnothing$ for any $\hat{N} \in N \backslash N^{-i}$).
	\end{itemize}
	
	A \textbf{rule} is a mechanism to distribute the generated revenue among the museums. Formally, it is a mapping $R:\mathcal{D}\longrightarrow \mathbb{R}^m_{+}$ that, for each problem $D \in \mathcal{D}$, determines an allocation $R(D) \in \mathbb{R}^m_+$ such that
	$$
	\sum_{i\in M}R_i(D)=E.
	$$
	
	\section{Revenue allocation mechanisms}
	
	In this section we present four natural rules that are suitable for this setting which propose different alternatives to distribute $E$. As the aggregate revenue is sum of the revenues from the general pass, the consortium passes, and the revenue from each individual museum, it is nature for the rules to operate in a similar manner and to divide the total revenue into the three contributing levels. In particular, the rules we propose are designed on four principles when establishing the distribution of revenues. The first is that the consortium structure of museums should be reflected in the distribution process, therefore the rules should be designed in stages. The second principle is that only museums visited by a pass holder should receive part of the price paid for that pass, since they are the ones that aroused the interest of the pass holder. The last two principles are equality and proportionality when allocating the revenues. The application of one or the other of these last two principles will depend on the relationship between the museums or between the consortia, and the relevance the central planner gives to the prices of the passes. The next four proposals divide this distribution process into two stages. In the first stage, the pass of each pass holder is allocated among the visited consortia, and in the second stage, each consortium’s allocation is further divided among the museums that comprise it.
	
	The first rule we introduce is the \emph{egalitarian-egalitarian rule}. In this rule, at the first stage the entrance fee $\pi^0$ is equally divided among the consortia the pass holder visits. At the second stage this share is further equally distributed among the visited museums within each consortium. The pass of the consortium, $\pi^{(i)}$, is also equally allocated among the visited museums, and the revenue associated with the sale of individual passes, $|N^{-i}|\pi^{-i}$, is assigned to the corresponding museum.
	
	\textbf{Egalitarian-egalitarian rule}: For each $D \in \mathcal{D}$ and each $i\in M$, 
	$$
	R^{EE}_{i}(D)=\sum_{a \in N^0_i} \frac{1}{|M_a \cap P^{(i)}|} \frac{1}{|K^0_a|} \pi^0+\sum_{a \in N^{(i)}_i} \frac{1}{|M_a|}\pi^{(i)}+|N^{-i}|\pi^{-i}
	$$
	
	In the next rule, at the first stage the general pass $\pi^0$ is initially divided among the consortia the pass holder visits, in proportion to the consortium prices. Then, within each consortium, this share is further proportionally distributed among the visited museums based on their individual prices. The consortium pass $\pi^{(i)}$ is also proportionally allotted among the visited museums, using the individual prices as a reference.
	
	\textbf{Proportional-proportional rule}: For each $D \in \mathcal{D}$ and each $i\in M$, 
	$$
	R^{PP}_{i}(D)=\sum_{a \in N^0_i} \frac{\pi^{-i}}{\sum_{j \in M_a \cap P^{(i)}} \pi^{-j}} \frac{\pi^{(i)}}{\sum_{t \in K^0_a}\pi^t} \pi^0+\sum_{a \in N^{(i)}_i} \frac{\pi^{-i}}{\sum_{j\in M_a}\pi^{-j}} \pi^{(i)}+|N^{-i}|\pi^{-i}
	$$
	
	The following two rules are mixtures of the previous ones. As for the \emph{proportional-egalitarian rule}, the general entrance $\pi^0$ paid by each pass holder is firstly divided among the consortia she visits in proportion to their consortium prices, and secondly, this share is then equally distributed among the visited museums within each consortium. The \emph{egalitarian-proportional rule} is the reverse process. The general pass is firstly divided uniformly among the visited consortia, and then in proportion to the individual prices within each consortium. 
	
	\textbf{Proportional-egalitarian rule}: For each $D \in \mathcal{D}$ and each $i\in M$, 
	$$
	R^{PE}_{i}(D)=\sum_{a \in N^0_i} \frac{1}{|M_a \cap P^{(i)}|} \frac{\pi^{(i)}}{\sum_{t \in K^0_a}\pi^t} \pi^0+\sum_{a \in N^{(i)}_i} \frac{1}{|M_a|}\pi^{(i)}+|N^{-i}|\pi^{-i}
	$$
	
	\textbf{Egalitarian-proportional rule}: For each $D \in \mathcal{D}$ and each $i\in M$, 
	$$
	R^{EP}_{i}(D)=\sum_{a \in N^0_i} \frac{\pi^{-i}}{\sum_{j \in M_a \cap P^{(i)}} \pi^{-j}} \frac{1}{|K^0_a|} \pi^0+ \sum_{a \in N^{(i)}_i} \frac{\pi^{-i}}{\sum_{j\in M_a}\pi^{-j}} \pi^{(i)} +|N^{-i}|\pi^{-i}
	$$
	
	In the following example we explain in detail the elements of the model and illustrate how the aforementioned rules operate.
	
	\begin{example}\label{example1}
		Consider the problem $D \in \mathcal{D}$ with three museums, $M=\{1,2,3\}$, organized into two consortia $P=\{\{1,2\},\{3\}\}$, and ten pass holders, $N=\{N^{-3},N^{-2},N^{-1},N^0,N^1,N^2\}$, where
		$$
		N^{-3}=\varnothing, \quad N^{-2}=\{1,2,3\}, \quad N^{-1}=\{4\}, \quad N^0=\{5,6\}, \quad N^1=\{7,8\}, \quad N^2=\{9,10\}.
		$$
		The pass prices are 
		$$
		\pi^{-3}=3, \quad \pi^{-2}=2, \quad \pi^{-1}=1, \quad \pi^0=4, \quad \pi^1=2, \quad \pi^2=3,
		$$
		and the visits are described by the following consumption matrices
		$$
		C^{-3}=\varnothing, \quad C^{-2}= \left( \begin{array}{ccc} 1 & 1 & 1  \end{array} \right),  \quad C^{-1}= \left( \begin{array}{c} 1  \end{array} \right),
		$$
		$$
		C^0= \left( \begin{array}{cc} 1 & 0 \\ 1 & 0 \\ 1 & 1  \end{array} \right)
		$$
		$$
		C^{1}= \left( \begin{array}{cc} 1 & 1 \\ 0 & 1 \end{array} \right), \quad C^{2}= \left( \begin{array}{cc} 1 & 1  \end{array} \right)
		$$
		Therefore,
		$$
		E = [1 \cdot 1 + 3 \cdot 2 + 0 \cdot 3] + 2 \cdot 4 + [2 \cdot 2 + 2 \cdot 3] = 25
		$$
		Several comments are in order. First, note that the consortium $P_2$ consists of just one pass holder. As outlined in the model's setup, for consistency, we assume $N^{-3}=\varnothing$ (to avoid double-counting of visitors) and $\pi^{-3}=\pi^2$ (prices remain the same whether 3 is considered as a consortium of a singleton or as an individual museum). Secondly, $N^{-2}=\{1,2,3\}$ indicates that pass holders 1, 2, and 3 have bought tickets to access museum 2 at a price of $\pi^{-2}=2$. Similarly, pass holders 6 and 7 have purchased passes granting entry to all museums at a cost of $\pi^0=4$, and pass holders in $N^1=\{7,8\}$ have acquired combined entrance tickets to the consortium $P_1$, which includes museums 1 and 2. Thirdly, regarding the consumption matrices, $C^{-3}=\varnothing$ for consistency. All entries in $C^{-1}$ and $C^{-2}$ are set to one, based on the assumption that any pass holder visits at least one of the museums included in their pass.\footnote{This is a standard assumption in the literature on the museum pass problem. In the final remarks, we provide a more detailed discussion on its implications and possible relaxation.}
		
		Below is a detailed look at how each of the four rules introduced previously works.
		\begin{itemize}
			\item Egalitarian-egalitarian rule. Regarding the global pass, the entrance fee of pass holder $5$ is initially divided equally among the consortia she visits, allocating $\frac{\pi^0}{2}$ to both $P_1$ and $P_2$. Within each consortium, this share is further equally distributed among the visited museums. Consequently, museum 1 receives $\frac{1}{2} \cdot \frac{\pi^0}{2}$, museum 2 receives $\frac{1}{2} \cdot \frac{\pi^0}{2}$, and museum 3 receives $\frac{\pi^0}{2}$. A similar procedure is applied to the pass paid by pass holder 6, as well as the revenue generated from selling the passes of consortium $P_1$. The amount paid by pass holder 7 is allocated to museum 1, while the price $\pi^1=2$ paid by pass holder 8 is divided between museums 1 and 2. Since the second consortium, $P_2$, consists of only one museum, it receives all the generated revenue. Regarding individual passes, each museum obtains its respective profit.
			\begin{align*}
				R^{EE}_1(D) &= \left[ \frac{1}{2} \cdot \frac{1}{2} \cdot 4+0 \cdot 4 \right] + \left[ 1 \cdot 2 + \frac{1}{2} \cdot 2 \right] + \left[ 1 \cdot 1 \right] = 5 \\
				R^{EE}_2(D) &= \left[ \frac{1}{2} \cdot \frac{1}{2} \cdot 4+0 \cdot 4 \right] + \left[ 0 \cdot 2 + \frac{1}{2} \cdot 2 \right] + \left[ 3 \cdot 2 \right] = 8 \\
				R^{EE}_3(D) &=\left[ \frac{1}{2} \cdot 1 \cdot 4+ 1 \cdot 4 \right] + \left[ 2 \cdot 3  \right] + \left[ 0 \cdot 3 \right] = 12
			\end{align*}
			\item Proportional-proportional rule. The ticket paid by pass holder 5 is initially divided among the consortia she visits in proportion to their consortium prices, allocating $\frac{2}{5}\pi^0$ to $P_1$ and $\frac{3}{5} \pi^0$ $P_2$. Within each consortium, this share is further distributed among the visited museums in proportion to their corresponding individual prices. Consequently, museum 1 receives $\frac{1}{3} \cdot \frac{2}{5} \cdot \pi^0$, museum 2 receives $\frac{2}{3} \cdot \frac{2}{5} \cdot \pi^0$, and museum 3 receives $\frac{3}{5} \cdot \pi^0$. The process pass holder 6 works similarly. Regarding the revenue generated by consortium $P_1$, the entrance fee paid by pass holder 7 goes entirely to museum 1, while the price $\pi^1=2$ paid by pass holder 8 is distributed between museums 1 and 2 in proportion to the prices of their individual passes ($\frac{1}{3}$ for museum 1 and $\frac{2}{3}$ for museum 2). The computation of the remaining revenue allocation is straightforward. Thus, the overall revenue distribution according to the proportional-proportional rule is as follows:
			\begin{align*}
				R^{PP}_1(D) &= \left[ \frac{1}{3} \cdot \frac{2}{5} \cdot 4+0 \cdot 4 \right] + \left[ 1 \cdot 2 + \frac{1}{3} \cdot 2 \right] + \left[ 1 \cdot 1 \right] = \frac{21}{5} \\
				R^{PP}_2(D) &= \left[ \frac{2}{3} \cdot \frac{2}{5} \cdot 4+0 \cdot 4 \right] + \left[ 0 \cdot 2 + \frac{2}{3} \cdot 2 \right] + \left[ 3 \cdot 2 \right] =  \frac{42}{5} \\
				R^{PP}_3(D) &=\left[ 1 \cdot \frac{3}{5} \cdot 4+ 1 \cdot 4 \right] + \left[ 2 \cdot 3  \right] + \left[ 0 \cdot 3 \right] = \frac{62}{5}
			\end{align*}
			\item Proportional-egalitarian rule. The entrance by pass holder 5 is firstly divided among the consortia she visits in proportion to their consortium prices, allocating $\frac{2}{5}\pi^0$ to $P_1$ and $\frac{3}{5} \pi^0$ $P_2$. Within each consortium, this share is further equally distributed among the visited museums. Then, museum 1 receives $\frac{1}{2} \cdot \frac{2}{5} \cdot \pi^0$, museum 2 receives $\frac{1}{2} \cdot \frac{2}{5} \cdot \pi^0$, and museum 3 receives $\frac{3}{5} \cdot \pi^0$. A similar procedure is applied to the pass paid by pass holder 6. The distribution of revenues generated by the consortia and individual passes works as in the egalitarian-egalitarian rule.
			\begin{align*}
				R^{PE}_1(D) &= \left[ \frac{1}{2} \cdot \frac{2}{5} \cdot 4+0 \cdot 4 \right] + \left[ 1 \cdot 2 + \frac{1}{2} \cdot 2 \right] + \left[ 1 \cdot 1 \right] = \frac{24}{5} \\
				R^{PE}_2(D) &= \left[ \frac{1}{2} \cdot \frac{2}{5} \cdot 4+0 \cdot 4 \right] + \left[ 0 \cdot 2 + \frac{1}{2} \cdot 2 \right] + \left[ 3 \cdot 2 \right] = \frac{39}{5} \\
				R^{PE}_3(D) &=\left[ 1 \cdot \frac{3}{5} \cdot 4+ 1 \cdot 4 \right] + \left[ 2 \cdot 3  \right] + \left[ 0 \cdot 3 \right] = \frac{62}{5}
			\end{align*}
			\item Egalitarian-proportional rule. Regarding the global pass, the entrance fee of pass holder 5 is initially divided equally among the consortia she visits, allocating $\frac{\pi^0}{2}$ to both $P_1$ and $P_2$. Within each consortium, this share is further distributed among the visited museums in proportion to their corresponding individual prices. Thus, museum 1 receives $\frac{1}{3} \cdot \frac{1}{2} \cdot \pi^0$, museum 2 receives $\frac{2}{3} \cdot \frac{1}{2} \cdot \pi^0$, and museum 3 receives $\frac{1}{2} \cdot \pi^0$. The process pass holder 6 works similarly. The distribution of revenues generated by the consortia and individual passes works as in the proportional-proportional rule.
			\begin{align*}
				R^{EP}_1(D) &= \left[ \frac{1}{3} \cdot \frac{1}{2} \cdot 4+0 \cdot 4 \right] + \left[ 1 \cdot 2 + \frac{1}{3} \cdot 2 \right] + \left[ 1 \cdot 1 \right] = \frac{13}{3} \\
				R^{EP}_2(D) &= \left[ \frac{2}{3} \cdot \frac{1}{2} \cdot 4+0 \cdot 4 \right] + \left[ 0 \cdot 2 + \frac{2}{3} \cdot 2 \right] + \left[ 3 \cdot 2 \right] =  \frac{26}{3} \\
				R^{EP}_3(D) &=\left[ 1 \cdot \frac{1}{2} \cdot 4+ 1 \cdot 4 \right] + \left[ 2 \cdot 3  \right] + \left[ 0 \cdot 3 \right] = \frac{36}{3}
			\end{align*}
		\end{itemize}

	\end{example}

	\section{Normative framework: Axioms}
	In general, there is no single criterion that determines which allocation rule to use. For this reason, in this section we introduce a set of axioms or properties that are reasonable for the problem framework being addressed. Depending on the axioms that are considered relevant in each specific situation, one rule or another will be selected. The first property says that the allocation is independent of the timing. Imagine the following two alternatives. One, we distribute the revenue every semester, considering $N_1$ and $N_2$ two disjoint groups of pass holders in each period. And two, we solve the problem once a year, considering the revenue generated by the whole group of pass holders, $N_1 \cup N_2$. In both cases the allocation must be the same. Similar principles have been applied, among others, by \cite{Martinez23} \cite{Martinez22d}, \cite{Martinez21} and \cite{Bergantinos15} in analogous contexts.
	
	\textbf{Composition.}  For each pair $(M,P,N_1,\pi,C_1), (M,P,N_2,\pi,C_2) \in \mathcal{D}$, with $N_1\cap N_2=\phi$
	$$
	R(M,P,N_1 \cup N_2,\pi,(C_1,C_2))=R(M,P,N_1,\pi,C_1)+R(M,P,N_2,\pi,C_2),
	$$
	where $(C_1,C_2)$ are the matrices of properly concatenating $C_1$ and $C_2$.
	
	The following two axioms embody the fundamental fairness principle that equals should be treated equally. The first axiom asserts that if two museums belong to the same consortium and have been visited by the same pass holders, their awards must also be identical. The second axiom extends a similar requirement to consortia that are considered equal. Although in distinct contexts, both properties are standard requirements in the literature (e.g. \cite{AlonsoMeijide20, AlonsoMeijide15}, \cite{Calvo10}).
	
	\textbf{Symmetry within consortia.}  For each $D \in \mathcal{D}$ and each pair $i,j\in M$ such that $P^{(i)}=P^{(j)}$, if the three following conditions hold
	\begin{enumerate}
		\item[(i)]  $C^0_{ia} = C^0_{ja}$ for any $a \in N^0$. 
		\item[(ii)] $C^{(i)}_{ia}=C^{(j)}_{ja}$ for any $a \in N^{(i)}=N^{(j)}$. 
		\item[(iii)]  $\pi^{-i}|N^{-i}|=\pi^{-j}|N^{-j}|$.
	\end{enumerate}
	Then, $R_i(D)=R_j(D)$.
	
	Consider two museums $i,j \in M$, both belonging to the same consortium. Condition (i) says that every pass holder who purchases the general pass visits both museums.\footnote{Notice that, if no one buys the general pass, i.e. $C^0=\varnothing$, the condition is vacuously satisfied.} Analogously, condition (ii) requires that anyone who buys the consortium pass accesses both $i$ and $j$. Finally, (iii) stipulates that at individual level, both $i$ and $j$ generate the same individual revenue. 
	
	The next property adapts the previous principle to symmetric consortia. 
	
	\textbf{Symmetry between consortia.}  For each $D \in \mathcal{D}$ and each pair $P_r,P_t\in P$, if the following three conditions hold 
	\begin{enumerate}
		\item[(i)] $\sum_{i \in P_r} C^0_{i\cdot} > 0 \Leftrightarrow \sum_{i \in P_t} C^0_{i\cdot} > 0$.
		\item[(ii)] $\pi^r|N^r|=\pi^t|N^t|$ 
		\item[(iii)] $\sum_{i \in P_r} \pi^{-i}|N^{-i}|= \sum_{j \in P_t} \pi^{-j}|N^{-j}|$
	\end{enumerate}
	Then, $\sum_{i \in P_r} R_i(D)= \sum_{j \in P_t} R_j(D)$.
	
	The next requirement relies on a well-established principle: any museum that does not contribute to revenue should be excluded from distribution. We say that a museum  $i \in M$ is \emph{dummy} if it has no visitors, either because they do not buy any pass that includes access to $i$, or because even with a pass that grants access to $i$, they do not visit it. The axioms states that the allocation of $i$ should be zero. Let $D \in \mathcal{D}$, we denote by $\NM(D)$ the set of dummy museums in problem $D$. More specifically,
	$$
	\NM(D) = \{i \in M : \text{ for each } \sigma \in \{-i,0,(i)\}, \text{ either } C^\sigma=\varnothing \text{ or } C^\sigma_{ia}=0 \; \forall a \in N^\sigma \}
	$$
	
	\textbf{Dummy.}  For each $D \in \mathcal{D}$ and each $i \in M$, if $i \in \NM(D)$ then $R_i(D)=0.$
	
	The following two properties establish stability criteria in a specific sense. If a museum chooses to divide into multiple entities, the other museums remain unaffected. For instance, imagine that Louvre Museum artificially splits its whole collection (between painting and sculpture, for example) pretending to act as two different museums, and this choice does not alter the revenues. The next requirement states that Louvre’s strategy does not impact the allocation of resources among the other museums. This same principles have already been applied to similar setting. See, for example, \cite{Slikker23}, \cite{Knudsen12}, \cite{Moulin07}, or \cite{Ju03a}.
	
	\textbf{Splitting-proofness of museums.}  Let $(M,P,N,\pi,C) \in \mathcal{D}$ and $i\in M$. Consider $(M',P',N',\pi',C') \in \mathcal{D}$ such that
	\begin{itemize}
		\item $M'=\left(M\backslash\{i\}\right)\cup\{i_1,\dots,i_r\}$.
		\item $P'=\left(P\backslash P^{(i)}\right)\cup \left(\left(P^{(i)}\backslash\{i\}\right)\cup\{i_1,\dots,i_r\}\right)$.
		\item $N'=\left(N\backslash N^{-i}\right)\cup\{N^{-i_1},\dots,N^{-i_r}\}$ where $N^{-i_h}=N^{-i}$ for all $h\in\{1,\dots,r\}$.
		\item $\pi'=\left(\pi\backslash\pi^{-i}\right)\cup\{\pi^{-i_1},\dots,\pi^{-i_r}\}$ where $\sum_{h=1}^{r}\pi^{-i_h}=\pi^{-i}$.
		\item $C'=\left(C\backslash\left(C^0 \cup C^{(i)}\cup C^{-i}\right)\right)\cup C'^0 \cup C'^{(i)}\cup\bigcup_{l=1}^{t}\{C^{-i_1},\dots,C^{-i_r}\}$ where $C'^0_{ja}=C^0_{ja}$ and $C'^0_{i_ha}=C^0_{ia}$, for all $a \in N^0$, $j \in M\backslash\left\{i\right\}$, $h \in \{1,\dots,r\}$; $C'^{(i)}_{ja}=C^{(i)}_{ja}$, $C'^{(i)}_{i_ha}=C^{(i)}_{ia}$ and $C^{-i_h}=C^{-i}$ for all $j\in P^{(i)}\backslash\{i\}$ and $h\in\{1,\dots,r\}$. 
	\end{itemize}
	Then, for each $j\in M\backslash\{i\}$,
	$$
	R_j(M,P,N,\pi,C)=R_j(M',P',N',\pi',C').
	$$
	
	The following property modifies the earlier principle to be applied to consortia instead to individual museums. 
	
	\textbf{Splitting-proofness of consortia.}  Let $(M,P,N,\pi,C) \in \mathcal{D}$ and $P_k\in P$ where $P_k=\{i_1,\dots,i_r\}$. Consider $P_{k^1},\dots,P_{k^t}$ such that $P_{k^l}=\{i^l_1,\dots,i^l_r\}$ for all $l\in\{1,\dots,t\}$; and $(M',P',N',\pi',C') \in \mathcal{D}$ be such that
	\begin{itemize}
		\item $M'=\left(M\backslash\{i_1,\dots,i_r\}\right)\cup\bigcup_{l=1}^{t}\{i^l_1,\dots,i^l_r\}$.
		\item $P'=\left(P\backslash P_k\right)\cup\bigcup_{l=1}^{t}P_{k^l}$.
		\item $N'=\left(N\backslash\left(N^k\cup\bigcup_{j=i_1}^{i_r}N^{-j}\right)\right)\cup\bigcup_{l=1}^{t}N^{k_l}\cup\bigcup_{l=1}^{t}\{N^{-i^l_1},\dots,N^{-i^l_r}\}$ where $N^{k_l}=N^{k}$ and $N^{-i^l_h}=N^{-i_h}$ for all $l\in\{1,\dots,t\}$, $h\in\{1,\dots,r\}$.
		\item $\pi'=\left(\pi\backslash\left(\pi^k\cup\bigcup_{j=i_1}^{i_r}\pi^{-j}\right)\right)\cup\bigcup_{l=1}^{t}\pi^{k_l}\cup\bigcup_{l=1}^{t}\{\pi^{-i^l_1},\dots,\pi^{-i^l_r}\}$ where $\sum_{l=1}^{t}\pi^{k_l}=\pi^{k}$ and $\sum_{l=1}^{t}\pi^{-i^l_h}=\pi^{-i_h}$ for all $h\in\{1,\dots,r\}$. 
		\item $C'=\left(C\backslash\left(C^0 \cup C^k\cup\bigcup_{j=i_1}^{i_r}C^{-j}\right)\right)\cup C'^0 \cup \bigcup_{l=1}^{t}C^{k_l}\cup\bigcup_{l=1}^{t}\{C^{-i^l_1},\dots,C^{-i^l_r}\}$ where $C'^0_{ja}=C^0_{ja}$, $C'^0_{i^l_ha}=C^0_{i_ha}$, for all $a \in N^0$, $j \in M \backslash\{i_1,\dots,i_r\}$, $h\in\{1,\dots,r\}$, $l\in\{1,\dots,t\}$; and $C^{k_l}=C^{k}$ and $C^{-i^l_h}=C^{-i_h}$ for all $l\in\{1,\dots,t\}$, $h\in\{1,\dots,r\}$. 
	\end{itemize}
	Then, for each $P_r \in P \backslash \{P_k\}$
	$$
	\sum_{j\in P_r}R_j(M,P,N,\pi,C)=\sum_{j\in P_r}R_j(M',P',N',\pi',C')
	$$
	
	The last properties also pertain to the stability of the allocation with respect to potential manipulations. In the previous two axioms, a museum (or a consortium) was split into several museums (or consortia). The manipulation referred to by the following property is an intermediate case, where a consortium intends to operate as a single museum. More specifically, consider the family of problems $\mathcal{D}^0$ (where pass holders only acquire the general pass). If each consortium acts as a single museum, the axiom requires that the allocation of this single museum coincides with the aggregated allotment of the consortium in the original problem.
	
	\textbf{Consortia consistency.} For each $(M,P,N,\pi,C) \in \mathcal{D}^0$ and each $P_k \in P$,
	$$
	\sum_{i\in P_k}R_i(M,P,N,\pi,C)=R_k\left(\{1,...,s\},\hat{P}^s,\hat{N},\hat{\pi},\hat{C}\right),
	$$
	where $\left(\{1,...,s\},P^s,\hat{N},\hat{\pi},\hat{C}\right)\in \mathcal{D}^0$ is such that $\hat{P}^s=\{\{1\},...,\{s\}\}$, $\hat{N}=\{\emptyset,\ldots,\emptyset,N^0,\emptyset,\ldots,\emptyset\}$, $\hat{\pi}=\{\hat{\pi}^{-s}, \ldots, \hat{\pi}^{-1},\pi^0,\hat{\pi}^1,\ldots,\hat{\pi}^s\} $ with $\hat{\pi}^{-t}=\hat{\pi}^{t}=\pi^{t}$ for any $t\in\{1,...,s\}$, and $\hat{C}=\{\emptyset,\ldots,\emptyset,\hat{C}^0,\emptyset,\ldots,\emptyset\}$ with, for any $t \in \{1,\ldots,s\}$ and $a \in \hat{N}^0$
	$$
	\hat{C}^0_{ta}=
	\begin{cases}
		1 & \text{if } \exists i \in P_t: C_{ia}=1 \\
		0 & \text{otherwise.}
	\end{cases}
	$$
	
	For each $D \in \mathcal{D}^0$, we will refer to $\left(\{1,...,s\},\hat{P}^s,\hat{N},\hat{\pi},\hat{C}\right)$ as the \emph{reduced problem} associated to $D$.
	
	Splitting-proofness of museums, splitting-proofness of consortia, and consortia consistency require the rule to be immune to specific (and distinct) types manipulations. Remarks \ref{remark2} and \ref{remark3} on the tightness of the characterization results prove that these three axioms are logically independent of one another.
	
	\section{Axiomatic characterization of the rules}
	
	In this section we present our main characterization results. For the sake of exposition and to avoid unnecessary repetition of reasoning, we first prove some technical lemmas, which are relegated to Appendix A.
	
	\begin{theorem}\label{thm2}
		A rule satisfies symmetry within consortia, symmetry between consortia, dummy, and composition if and only if it is the egalitarian-egalitarian rule.
		\begin{proof}
			We start by showing that the \emph{egalitarian-egalitarian rule} satisfies the axioms in the statement.
			\begin{itemize}
				\item Composition. Let $(M,P,N,\pi,C) , (M,P,N',\pi,C')  \in \mathcal{D}$ such that $N \cap N' = \emptyset$. Let $i \in M$. It follows that
				\begin{align*}
					R^{EE}_{i}(M,P,N\cup N',\pi\cup\pi,C\cup C') &= \sum_{a \in N^0_i\cup N'^0_i}\frac{1}{|M_a \cap P^{(i)}|} \frac{1}{|K^0_a|} \pi^0\\
					&+ \sum_{a \in N^{(i)}_i\cup N'^{(i)}_i} \frac{1}{|M_a|}\pi^{(i)}+|N^{-i}|\pi^{-i}+|N'^{-i}|\pi^{-i}\\
					&=\sum_{a \in N^0_i}\frac{1}{|M_a \cap P^{(i)}|} \frac{1}{|K^0_a|} \pi^0+\sum_{a \in N^{(i)}_i} \frac{1}{|M_a|}\pi^{(i)}+|N^{-i}|\pi^{-i}\\
					&+\sum_{a \in  N'^0_i}\frac{1}{|M_a \cap P^{(i)}|} \frac{1}{|K^0_a|} \pi^0+\sum_{a \in N'^{(i)}_i} \frac{1}{|M_a|}\pi^{(i)}+|N'^{-i}|\pi^{-i}\\
					&=R^{EE}_{i}(M,P,N,\pi,C) + R^{EE}_{i}(M,P,N',\pi,C') .
				\end{align*}
				\item Dummy. Let $D \in \mathcal{D}$ and $i \in \NM(D)$. Since $N^0_i=N^{(i)}_i=N^{-i}=\varnothing$, then
				$$
				R^{EE}_{i}(D) = \sum_{a \in N^0_i}\frac{1}{|M_a \cap P^{(i)}|} \frac{1}{|K^0_a|} \pi^0+\sum_{a \in N^{(i)}_i} \frac{1}{|M_a|}\pi^{(i)}+|N^{-i}|\pi^{-i}=0.
				$$
				\item Symmetry within consortia. Let $D \in \mathcal{D}$. Let $i,j\in M$ that satisfy the conditions in the definition of the property. Then
				\begin{align*}
					R^{EE}_{i}(D) &= \sum_{a \in N^0_i}\frac{1}{|M_a \cap P^{(i)}|} \frac{1}{|K^0_a|} \pi^0+\sum_{a \in N^{(i)}_i} \frac{1}{|M_a|}\pi^{(i)}+|N^{-i}|\pi^{-i}\\
					&=\sum_{a \in N^0_j}\frac{1}{|M_a \cap P^{(j)}|} \frac{1}{|K^0_a|} \pi^0+\sum_{a \in N^{(j)}_i} \frac{1}{|M_a|}\pi^{(j)}+|N^{-j}|\pi^{-j}\\
					&=R^{EE}_{j}(D). 
				\end{align*}
				\item Symmetry between consortia. Let $D \in \mathcal{D}$. Let $P_r,P_t\in P$ that satisfy the conditions in the definition of the property. Then
				\begin{align*}
					\sum_{i\in P_r} R^{EE}_{i}(D) &= \sum_{i\in P_r}\sum_{a \in N^0_i}\frac{1}{|M_a \cap P^{(i)}|} \frac{1}{|K^0_a|} \pi^0+\sum_{i\in P_r}\sum_{a \in N^{(i)}_i} \frac{1}{|M_a|}\pi^{(i)}+\sum_{i\in P_r}|N^{-i}|\pi^{-i} \\
					&= \sum_{a \in N^0:r\in K^0_a} \frac{1}{|K^0_a|} \pi^0+\pi^r|N^r|+\sum_{j\in P_t}|N^{-j}|\pi^{-j} \\
					&= \sum_{a \in N^0:t\in K^0_a} \frac{1}{|K^0_a|} \pi^0+\pi^t|N^t|+\sum_{j\in P_t}|N^{-j}|\pi^{-j} \\
					&= \sum_{j\in P_t}\sum_{a \in N^0_j}\frac{1}{|M_a \cap P^{(j)}|} \frac{1}{|K^0_a|} \pi^0+\sum_{j\in P_t}\sum_{a \in N^{(j)}_j} \frac{1}{|M_a|}\pi^{(j)}+\sum_{j\in P_r}|N^{-j}|\pi^{-j} \\
					&= \sum_{j\in P_t}  R^{EE}_{j}(D). 
				\end{align*}
			\end{itemize}
			Now, we prove the converse. Let $R$ be a rule that satisfies the properties in the statement, and let $(M,P,N,\pi,C) \in \mathcal{D}$. We divide the proof into several steps.
			\begin{itemize}
				\item[(i)] Let $i \in M$, and let $D \in \mathcal{D}^{-i}$. By Lemma \ref{lemma.3.}, $R_j(D) = 0 = R^{EE}_j(D)$ for any $j \in M\backslash \{i\}$, and then $R_i(D)=\pi^{-i} |N^{-i}|= R^{EE}_i(D)$ by definition of rule. Thus, $R$ coincides with the egalitarian-egalitarian rule in any subclass of problems $\mathcal{D}^{-i}$.
				\item[(ii)] Let $D^0=(M,P,\{a\},\pi,C^{(a)}) \in \mathcal{D}^0$ be a problem with just one pass holder purchasing the general pass, and let $i \in M$ with $i\in P_k$. Notice that, if $k \notin K^0_a$, then \emph{dummy} implies that $R_i(D^0)=0=R^{EE}_i(D^0)$. If $k \in K^0_a$, in application of \emph{symmetry between consortia}, all visited consortia obtain the same aggregate award. And thus,
				$$
				\sum_{j \in P_k} R_j(D^0) = \frac{\pi^0}{|K^0_a|}
				$$
				Within consortium $P_k$ we have two types of museums: visited ($j \in P_k$ such that $C^{(a)}_{ja}=1$) and non-visited  ($j \in P_k$ such that $C^{(a)}_{ja}=0$). \emph{Dummy} requires that any non-visited museum gets zero, while \emph{symmetry within consortia} implies that visited museum gets equal share of $\frac{\pi^0}{|K^0_a|}$. Hence, as $i \in P_k$, if $C^{(a)}_{ia}=0$ then $R_i(D^0)=0=R^{EE}_i(D^0)$. And, if $C^{(a)}_{ia}=1$, then $R_i(D^0)= \frac{\pi}{|K^0_a|}\frac{1}{|M_a \cap P^{(i)}|}=R^{EE}_i(D^0)$. Therefore, in any case,
				$$
				R_i(D^0)=R^{EE}_i(D^0)
				$$
				Now, let $D=(M,P,N,\pi,C) \in \mathcal{D}^{0}$ without any restriction on the cardinality of $N^{0}$. By \emph{composition}, it follows that, for each $i \in M$,
				\begin{align*}
					R_i (D) &= \sum_{a \in N^0} R_i \left(M,P,\{a\},\pi, C^{(a)} \right) \\
					&= \sum_{a \in N^0} R^{EE}_i \left(M,P,\{a\},\pi, C^{(a)} \right) \\
					&=R^{EE}_i (D)
				\end{align*}
				Therefore, $R$ and the egalitarian-egalitarian rule coincide in $\mathcal{D}^0$.
				\item[(iii)] Let $P_t \in P$, and let $D^t=(M,P,\{a\},\pi,C^{(a)}) \in \mathcal{D}^t$ be a problem with just one of those pass holders. If $i \in \NM(D^t)$, \emph{dummy} implies that $R_i(D^t) = 0 = R^{EE}_i(D^t)$. In application of \emph{symmetry within consortia}, all the other museums in $M \backslash \NM(D^t)$ get an equal share of $\pi^t$. And thus, if $i$ is not dummy, $R_i(D^t)=\frac{\pi^t}{|M_a|} = R^{EE}_i(D^t)$. Now, let $D=(M,P,N,\pi,C) \in \mathcal{D}^{t}$ without any restriction on the cardinality of $N^t$. \emph{Composition} guarantees that, for each $i \in M$,
				$$
				R_i (D) = \sum_{a \in N^t} R_i \left(M,P,\{a\},\pi, C^{(a)} \right) = R^{EE}_i (D)
				$$
				Thus, $R$ coincides with the egalitarian-egalitarian rule in any subclass of problems $\mathcal{D}^t$.
			\end{itemize}
			Finally, in application of Lemma \ref{lemma.4.}, $R=R^{EE}$.
		\end{proof}
	\end{theorem}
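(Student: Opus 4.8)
The plan is to prove both implications. For the direction that $R^{EE}$ satisfies the four axioms, I would substitute the closed form of $R^{EE}_i$ into each property. \emph{Composition} follows because every summand of $R^{EE}_i(D)$ is additive in the set of pass holders: the terms indexed by $N^0_i$, by $N^{(i)}_i$, and the individual-pass term $|N^{-i}|\pi^{-i}$ all split along a disjoint union $N = N_1 \cup N_2$. \emph{Dummy} is immediate, since a dummy museum $i$ has $N^0_i = N^{(i)}_i = \varnothing$ and $N^{-i}=\varnothing$, so all three summands vanish. \emph{Symmetry within consortia} follows by comparing the formulas for $i$ and $j$ term by term, using conditions (i)--(iii). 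The only computation requiring care is \emph{symmetry between consortia}: here I would sum $R^{EE}_i(D)$ over $i \in P_r$ and check that the general-pass term collapses to $\sum_{a \in N^0 : r \in K^0_a} \frac{1}{|K^0_a|}\pi^0$ (which depends on $P_r$ only through condition (i)), the consortium-pass term telescopes to $\pi^r|N^r|$, and the individual term to $\sum_{j\in P_r}|N^{-j}|\pi^{-j}$; conditions (i)--(iii) then equate these aggregates across $P_r$ and $P_t$.

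For the converse, the guiding idea is that the total revenue $E$ is the sum of three sources---individual passes, the general pass, and the consortium passes---so I would pin the rule down separately on the three subclasses $\mathcal{D}^{-i}$, $\mathcal{D}^0$, and $\mathcal{D}^t$ and then recombine. On $\mathcal{D}^{-i}$ every museum other than $i$ has no visitor and is therefore dummy; \emph{dummy} forces their shares to zero, so budget balance (the definition of a rule) leaves $R_i = |N^{-i}|\pi^{-i}$, which matches $R^{EE}$. This is the content I would extract from Lemma \ref{lemma.3.}.

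The subclasses $\mathcal{D}^0$ and $\mathcal{D}^t$ I would handle by first reducing to a single pass holder and then invoking \emph{composition}. On $\mathcal{D}^0$ with one holder $a$, I would argue in two nested stages: \emph{dummy} zeroes out every consortium $P_k$ with $k \notin K^0_a$, while \emph{symmetry between consortia} equalizes the aggregate award of all visited consortia, giving each $\frac{\pi^0}{|K^0_a|}$; then, within a visited consortium, \emph{dummy} zeroes out unvisited museums and \emph{symmetry within consortia} splits $\frac{\pi^0}{|K^0_a|}$ equally among the $|M_a \cap P^{(i)}|$ visited ones---exactly $R^{EE}$. The case $\mathcal{D}^t$ is the same argument with only the second (within-consortium) stage active. \emph{Composition} then lifts both identities from single holders to arbitrary $N^0$ and $N^t$.

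Finally I would glue the three subclass results together. Given an arbitrary $D \in \mathcal{D}$, partition the pass holders by the type of pass they bought; this expresses $D$, via \emph{composition}, as the sum of a problem in $\mathcal{D}^0$, one problem in each $\mathcal{D}^t$, and one problem in each $\mathcal{D}^{-i}$. Since $R$ agrees with $R^{EE}$ on each piece, additivity yields $R = R^{EE}$ throughout; this decomposition-and-recombination step is what Lemma \ref{lemma.4.} formalizes. The main obstacle is the nested two-stage argument on $\mathcal{D}^0$: one must apply \emph{symmetry between consortia} at the consortium level before \emph{symmetry within consortia} at the museum level, and verify that \emph{dummy} correctly prunes both unvisited consortia and unvisited museums so that the two symmetries act on the right equivalence classes. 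Once that nesting is established, the remaining steps are routine.
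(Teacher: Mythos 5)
Your proposal is correct and follows essentially the same route as the paper's proof: verifying the axioms directly for $R^{EE}$, then using dummy plus budget balance on $\mathcal{D}^{-i}$ (the paper's Lemma \ref{lemma.3.}), the nested symmetry-between/symmetry-within argument on single-holder problems in $\mathcal{D}^0$ and $\mathcal{D}^t$ lifted by composition, and finally the composition-based gluing across subclasses (the paper's Lemma \ref{lemma.4.}). No gaps; the order of application of the two symmetries and the role of dummy in pruning unvisited consortia and museums match the paper's argument exactly.
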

	
	The independence of the properties in Theorem \ref{thm2} is proved in the following remark.
	
	\begin{remark}
		\label{remark1}
		The axioms of Theorem \ref{thm2} are independent.
		\begin{itemize}
			\item[(a)] Let $R^1$ de defined as follows. For each $i\in M$
			$$
			R^{1}_{i}(D)=\sum_{a \in N^0_i} \frac{1}{|M_a \cap P^{(i)}|} \frac{1}{|K^0_a|} \pi^0+\frac{\sum_{a\in N^{(i)}}C_{ia}}{\sum_{j\in P^{(i)}}\sum_{a\in N^{(i)}}C_{ja}}|N^{(i)}|\pi^{(i)}+|N^{-i}|\pi^{-i}.
			$$
			The rule $R^1$ satisfies symmetry within consortia, symmetry between consortia, dummy, but not composition.
			
			\item[(b)] The rule $R^{EP}$ satisfies composition, symmetry between consortia, dummy, but not symmetry within consortia.
			
			\item[(c)] The rule $R^{PE}$ satisfies composition, symmetry within consortia, dummy, but not symmetry between consortia.
			
			\item[(d)] Let $R^2$ be defined as follows. For each $i\in M$
			$$
			R^{2}_{i}(D)=\frac{E}{|P||P^{(i)}|}.
			$$
			The rule $R^2$ satisfies composition, symmetry within consortia, symmetry between consortia, but not dummy.
		\end{itemize}
	\end{remark}

	\begin{theorem}\label{thm3}
		A rule satisfies composition, dummy, splitting-proofness of museums, splitting-proofness of consortia and consortia consistency if and only if it is the proportional-proportional rule.
		\begin{proof}
			We start by showing that the \emph{proportional-proportional rule} satisfies the axioms in the statement.
			\begin{itemize}
				\item Composition. Let $(M,P,N,\pi,C) , (M,P,N',\pi,C')  \in \mathcal{D}$ such that $N \cap N' = \emptyset$. Let $i \in M$. It follows that
				\begin{align*}
					R^{PP}_{i}(M,P,N\cup N',\pi,C\cup C') &= \sum_{a \in N^0_i\cup N'^0_i}\frac{\pi^{-i}}{\sum_{j \in M_a \cap P^{(i)}} \pi^{-j}} \frac{\pi^{(i)}}{\sum_{t \in K^0_a}\pi^t} \pi^0 \\
					&+\sum_{a \in N^{(i)}_i\cup N'^{(i)}_i}  \frac{\pi^{-i}}{\sum_{j\in M_a}\pi^{-j}}\pi^{(i)} + |N^{-i}|\pi^{-i}+|N'^{-i}|\pi^{-i} \\
					&= \sum_{a \in N^0_i} \frac{\pi^{-i}}{\sum_{j \in M_a \cap P^{(i)}} \pi^{-j}} \frac{\pi^{(i)}}{\sum_{t \in K^0_a}\pi^t} \pi^0 \\
					&+ \sum_{a \in N^{(i)}_i} \frac{\pi^{-i}}{\sum_{j\in M_a}\pi^{-j}} \pi^{(i)}+|N^{-i}|\pi^{-i} \\
					&+ \sum_{a \in N'^0_i} \frac{\pi^{-i}}{\sum_{j \in M_a \cap P^{(i)}} \pi^{-j}} \frac{\pi^{(i)}}{\sum_{t \in K^0_a}\pi^t} \pi^0 \\
					&+ \sum_{a \in N'^{(i)}_i} \frac{\pi^{-i}}{\sum_{j\in M_a}\pi^{-j}} \pi^{(i)}+|N'^{-i}|\pi^{-i}\\
					&=R^{PP}_{i}(M,P,N,\pi,C) + R^{PP}_{i}(M,P,N',\pi,C') .
				\end{align*}
				\item Dummy. Let $D \in \mathcal{D}$ and $i\in \NM(D)$. Then
				$$
				R^{PP}_{i}(D) = \sum_{a \in N^0_i} \frac{\pi^{-i}}{\sum_{j \in M_a \cap P^{(i)}} \pi^{-j}} \frac{\pi^{(i)}}{\sum_{t \in K^0_a}\pi^t} \pi^0+\sum_{a \in N^{(i)}_i} \frac{\pi^{-i}}{\sum_{j\in M_a}\pi^{-j}} \pi^{(i)}+|N^{-i}|\pi^{-i}=0.
				$$
				\item Consortia consistency. Let $D \in \mathcal{D}^0$ and $P_k \in P$,
				\begin{align*}
					\sum_{i\in P_k} R^{PP}_{i}(D) &= \sum_{i\in P_k}\sum_{a \in N^0_i} \frac{\pi^{-i}}{\sum_{j \in M_a \cap P^{(i)}} \pi^{-j}} \frac{\pi^{k}}{\sum_{t \in K^0_a}\pi^t} \pi^0 = \sum_{a \in N^0:k\in K^0_a} \frac{\pi^{k}}{\sum_{t \in K^0_a}\pi^t} \pi^0 \\
					&= R^{PP}_k\left(\{1,...,s\},\hat{P}^s,\hat{N},\hat{\pi},\hat{C}\right).
				\end{align*}
				\item Splitting-proofness of museums. Let $(M,P,N,\pi,C)  \in \mathcal{D}^0$ and $i \in M$. Consider $(M',P',N',\pi',C') \in \mathcal{D}$ as it is described in the definition of the property.\footnote{For sake of exposition and brevity, we do not replicate here the tedious definitions of the elements of $(M',P',N',\pi',C')$, which are already described in the statement of the axiom.} Then, for each $j\in M\backslash\{i\}$, if $j\notin P^{(i)}$,
				\begin{align*}
					R^{PP}_{j}(D) & =\sum_{a \in N^0_j} \frac{\pi^{-j}}{\sum_{l \in M_a \cap P^{(j)}} \pi^{-l}} \frac{\pi^{(j)}}{\sum_{t \in K^0_a}\pi^t} \pi^0+\sum_{a \in N^{(j)}_j} \frac{\pi^{-j}}{\sum_{l\in M_a}\pi^{-l}} \pi^{(j)}+|N^{-j}|\pi^{-j} \\
					&= R^{PP}_j(M',P',N',\pi',C').
				\end{align*}
				If $j\in P^{(i)}$,
				\begin{align*}
					R^{PP}_{j}(D) &= \sum_{a \in N^0_j} \frac{\pi^{-j}}{\sum_{l \in M_a \cap P^{(j)}} \pi^{-l}} \frac{\pi^{(j)}}{\sum_{t \in K^0_a}\pi^t} \pi^0+\sum_{a \in N^{(j)}_j} \frac{\pi^{-j}}{\sum_{l\in M_a}\pi^{-l}} \pi^{(j)}+|N^{-j}|\pi^{-j} \\
					&= \sum_{a \in N^0_j} \frac{\pi^{-j}}{\sum_{l \in M'_a \cap P'^{(j)}} \pi^{-l}} \frac{\pi^{(j)}}{\sum_{t \in K^0_a}\pi^t} \pi^0+\sum_{a \in N^{(j)}_j} \frac{\pi^{-j}}{\sum_{l\in M'_a}\pi^{-l}} \pi^{(j)}+|N^{-j}|\pi^{-j} \\
					&= R^{PP}_j(M',P',N',\pi',C')
				\end{align*}
				where $P'^{(j)}=\left(P^{(i)}\backslash\{i\}\right)\cup\{i_1,\dots,i_r\}$ and for each $a\in N^{\sigma}_j$ with $\sigma\in\{0,(j)\}$, $M'_a=\{l\in M':C'^{\sigma}_{la}=1\}$. Therefore, if $i\in M_a$ then $i_h\in M'_a$ for all $h\in\{1,\dots,r\}$. Since $\pi^{-i}=\sum_{h=1}^{r}\pi^{-i_h}$ we have $\sum_{l\in M_a}\pi^{-l}=\sum_{l\in M'_a}\pi^{-l}$.
				\item Splitting-proofness of consortia. Let $(M,P,N,\pi,C) \in \mathcal{D}$, and consider $(M',P',N',\pi',C') \in \mathcal{D}$ as it is set in the definition of the property. Then, for each $P_r \in P \backslash \{P_k\}$
				\begin{align*}
					\sum_{j\in P_r}R^{PP}_j(M,P,N,\pi,C)=\\
					\sum_{j\in P_r}\left(\sum_{a \in N^0_j} \frac{\pi^{-j}}{\sum_{l \in M_a \cap P^{(j)}} \pi^{-l}} \frac{\pi^{(j)}}{\sum_{t \in K^0_a}\pi^t} \pi^0+\sum_{a \in N^{(j)}_j} \frac{\pi^{-j}}{\sum_{l\in M_a}\pi^{-l}} \pi^{(j)}+|N^{-j}|\pi^{-j}\right)=\\
					\sum_{a \in N^0: r\in K^0_a} \frac{\pi^{(r)}}{\sum_{t \in K^0_a}\pi^t} \pi^0+|N^{(r)}| \pi^{(r)}+\sum_{j\in P_r}|N^{-j}|\pi^{-j}=\\
					\sum_{a \in N^0: r\in K'^0_a} \frac{\pi^{(r)}}{\sum_{t \in K^0_a}\pi^t} \pi^0+|N^{(r)}| \pi^{(r)}+\sum_{j\in P_r}|N^{-j}|\pi^{-j}=\\
					\sum_{j\in P_r}R^{PP}_j(M',P',N',\pi',C').
				\end{align*}
				where if $k\in K^0_a$ then $k^h\in K^0_a$ for all $h\in\{1,\dots,t\}$. Since $\pi^{k}=\sum_{h=1}^{t}\pi^{k^h}$ we have $\sum_{t \in K^0_a}\pi^t=\sum_{t \in K'^0_a}\pi^t$. By other way,  if $k\notin K^0_a$ then $ K^0_a= K'^0_a$.

			\end{itemize}
			
			Now, we prove the converse. Let $R$ be a rule that satisfies the properties in the statement, and let $(M,P,N,\pi,C) \in \mathcal{D}$. We divide the proof into several steps.
			\begin{itemize}
				\item[(i)] Let $i \in M$, and let $(M,P,N,\pi,C) \in \mathcal{D}^{-i}$. By Lemma \ref{lemma.3.}, $R_i(M,P,N,\pi,C)=\pi^{-i} |N^{-i}|= R^{PP}_i(M,P,N,\pi,C)$. Thus, $R$ and $R^{PP}$ coincide in any subclass of problems $\mathcal{D}^{-i}$.
				\item[(ii)] Let $D^0=(M,P,\{a\},\pi,C^{(a)}) \in \mathcal{D}^0$ be a problem with just one pass holder purchasing the general pass, and let $D^{0}_q=(\{1,\dots,s\},\hat{P}^s,\{a\},\hat{\pi},\hat{C}^{(a)})$ be the associated problem as it is indicated in the definition of consortia consistency. As $R$ satisfies \emph{dummy}, for all $k\in \{1,\dots,s\}$ such that $\hat{C}^{(a)}_{ka}=0$, we have that $R_k\left(D^{0}_q\right)=0$. Let us consider $K_a^0$ the set of consortia visited by the pass holder $a$. By definition of rule, there must exist at least one consortium $k\in K_a^0$ such that $R_k\left(D^{0}_q\right)=b>0$. Then, we split the consortium $k$ into $h$ new consortia such that $\pi^{k_i}=\pi^{-k_i}=\frac{\pi^k}{h}$ for each $i\in\{1,\dots,h\}$. Let $D'^{0}_q$ be the problem where the consortium $k$ splits into $h$ new consortia. Lemma \ref{lemma.2.} applied to $D'^{0}_q$ implies that $R_{k_i}\left(D'^{0}_q\right)=R_{k_j}\left(D'^{0}_q\right)$ for all $i,j\in\{1,\dots,h\}$. By \emph{splitting-proofness of consortia} $R_k\left(D^{0}_q\right)=\sum_{i=1}^{h}R_{k_i}\left(D'^{0}_q\right)=hR_{k_1}\left(D'^{0}_q\right)$. Therefore, $R_{k_1}\left(D'^{0}_q\right)=\frac{b}{h}$.
				
				Now let us consider another consortia $r\in K_a^0$. We can split the consortium $r$ into $\displaystyle h_r=\left\lfloor\frac{\pi^{r}}{\frac{\pi^k}{h}}\right\rfloor+1$ consortia\footnote{As $h$ can be as arbitrarily large as we need, we can always assume that $\displaystyle\left\lfloor\frac{\pi^{r}}{\frac{\pi^k}{h}}\right\rfloor \geq 1$.} such that $\pi^{r_i}=\pi^{-r_i}=\frac{\pi^k}{h}$ for each $i\in\{1,\dots, h_r-1\}$, and $\pi^{r_{h_r}}=\pi^{-r_{h_r}}=\pi^r-\displaystyle \sum_{i=1}^{h_r-1}\pi^{r_i}$. Let $D''^{0}_q$ be the problem where the consortium $r$ splits into $h_r$ new consortia. Lemma \ref{lemma.2.} applied to $D''^{0}_q$ implies that $R_{r_i}\left(D''^{0}_q\right)=R_{k_1}\left(D''^{0}_q\right)$ for each $i\in\{1,\dots,h_r-1\}$ and by \emph{splitting-proofness of consortia}
				$$R_{k_1}\left(D''^{0}_q\right)=R_{k_1}\left(D'^{0}_q\right)=\frac{b}{h}.$$
				Therefore,
				$$R_r\left(D^{0}_q\right)=R_r\left(D'^{0}_q\right)=\sum_{i=1}^{h_r-1}R_{r_i}\left(D''^{0}_q\right)+R_{r_{h_r}}\left(D''^{0}_q\right)=\displaystyle\left(h_r-1\right)\frac{b}{h}+R_{r_{h_r}}\left(D''^{0}_q\right).$$

				%  Now let us see that $R_{r_{\displaystyle\lfloor\frac{\pi^{r}}{\frac{\pi^k}{h}}\rfloor+1}}\left(D''^{0s}\right)\leq R_{r_1}\left(D''^{0s}\right)=\frac{B^k}{h}$. Consider the problem $D'''^{0s}$ obtained by splittting consortia $r_1$ in two museums $r'_1$ and $r''_1$ such that 
				%  $$\pi^{r'_1}=\pi^{-r'_1}=\pi^r-\displaystyle \sum_{i=1}^{\displaystyle\lfloor\frac{\pi^{r}}{\frac{\pi^k}{h}}\rfloor}\pi^{r_i}$$
				%  and
				%  $$\pi^{r''_1}=\pi^{-r''_1}=\pi^{r_1}-\left(\pi^r-\displaystyle \sum_{i=1}^{\displaystyle\lfloor\frac{\pi^{r}}{\frac{\pi^k}{h}}\rfloor}\pi^{r_i}\right)\footnote{Aquí hay que decir que h es sufcientemente grande para que esto sea positivo}.$$
				
				%  As $R$ satisfies \emph{splitting-proofness of consortia}
				%  $$R_{r'_1}\left(D'''^{0s}\right)\leq R_{r'_1}\left(D'''^{0s}\right)+R_{r''_1}\left(D'''^{0s}\right)=R_{r_1}\left(D''^{0s}\right)$$
				%  and by Lemma \ref{lemma.2.}
				% $$R_{r'_1}\left(D'''^{0s}\right)=R_{r_{\displaystyle\lfloor\frac{\pi^{r}}{\frac{\pi^k}{h}}\rfloor+1}}\left(D'''^{0s}\right)=R_{r_{\displaystyle\lfloor\frac{\pi^{r}}{\frac{\pi^k}{h}}\rfloor+1}}\left(D''^{0s}\right).$$
				
				% Therefore, $R_{r_{\displaystyle\lfloor\frac{\pi^{r}}{\frac{\pi^k}{h}}\rfloor+1}}\left(D''^{0s}\right)\leq \frac{B^k}{h}$.
				
				Now, by Lemma \ref{lemma.5.} we have that $R_{r_{h_r}}\left(D''^{0}_q\right)\leq R_{r_1}\left(D''^{0}_q\right)=\frac{b}{h}$. Finally,  
				$$\frac{R_k\left(D^{0}_q\right)}{R_r\left(D^{0}_q\right)}=\frac{b}{\left(h_r-1\right) \frac{b}{h}+R_{r_{h_r}}\left(D''^{0}_q\right)}\leq \frac{b}{\left\lfloor\frac{\pi^{r}}{\frac{\pi^k}{h}}\right\rfloor \frac{b}{h}},$$
				and, by other hand,
				$$\frac{R_k\left(D^{0}_q\right)}{R_r\left(D^{0}_q\right)}=\frac{b}{\left(h_r-1\right) \frac{b}{h}+R_{r_{h_r}}\left(D''^{0}_q\right)}\geq \frac{b}{\left\lfloor\frac{\pi^{r}}{\frac{\pi^k}{h}}\right\rfloor \frac{b}{h}+\frac{b}{h}}.$$
				
				Therefore, 
				$$\frac{1}{\left\lfloor\frac{\pi^{r}}{\frac{\pi^k}{h}}\right\rfloor \frac{1}{h}+\frac{1}{h}} \leq \frac{R_k\left(D^{0}_q\right)}{R_r\left(D^{0}_q\right)} \leq \frac{1}{\left\lfloor\frac{\pi^{r}}{\frac{\pi^k}{h}}\right\rfloor \frac{1}{h}}.$$
				
				Given that $\displaystyle \lim_{h \rightarrow +\infty} \left\lfloor\frac{\pi^{r}}{\frac{\pi^k}{h}}\right\rfloor \frac{1}{h}= \frac{\pi^r}{\pi^k}$ then $\displaystyle \frac{R_k\left(D^{0}_q\right)}{R_r\left(D^{0}_q\right)} = \frac{\pi^k}{\pi^r}$. Now, this can be done for any pair of consortia in $K^0_a$, therefore we have that for every $t\in\{1,\dots,s\}$
				$$
				R_t\left(D^{0}_q\right)=
				\begin{cases}
					0 & \text{if } t \text{ is dummy in } D^{0}_q \\
					\dfrac{\pi^t}{\sum_{r \in K^{0}_a}\pi^r} \pi^0 & \text{otherwise}
				\end{cases}
				$$ 
				
				%  Let $D'^{0}_q$ be the problem in which each consortium $t \in \{1,\ldots, s\}$ in $D^{0s}$ splits into $\hat{\pi}^{t}$ new identical unions and all prices of each of those new unions equal to 1. Let $t\in \{1,\dots,s\}$, applying \emph{splitting-proofness of consortia} iteratively, we get that 
				%  $$
				%  R_t\left(D^{0s}\right)=\sum_{i=1}^{\hat{\pi}^t}R_{t_i}\left(D'^{0s}\right). 
				%  $$ 
				% Notice that, on one hand, if $t$ is dummy in $D^{0s}$, then any copy $t_i$ is also dummy in $D'^{0s}$. On the other hand, in the problem $D'^{0s}$, all (non-dummy) unions are singletons with the same price. Lemma \ref{lemma.2.} applied to $D'^{0s}$ implies that they all must get the same allocation. Therefore, as $\hat{\pi}^t= \pi^t$ and $\hat{\pi}^r= \pi^r$,
				% $$
				% R_t\left(D^{0s}\right)=R^{PP}_t\left(D^{0s}\right)=
				% \begin{cases}
					% 0 & \text{if } t \text{ is dummy in } D^{0s} \\
					% \dfrac{\pi^t}{\sum_{r \in K^{0s}_a}\pi^r} \pi^0 & \text{otherwise}
					% \end{cases}
				% $$ 
				Now, consider again the museum pass problem $D^{0}$. For each partition $P_t \in P$, \emph{consortia consistency} requires that
				\begin{equation}\label{eqn1}
					\sum_{i\in P_t}R_i\left(D^{0}\right)=R_t\left(D^{0}_q\right).
				\end{equation}
				
				Let us now consider a consortium $P_t\in P$ such that $M_a\cap P_t \neq \varnothing$ and, therefore, $R_t\left(D^{0}_q\right) >0$. Since $R$ satisfies \emph{dummy} for all $i\in P_t$ such that $C^{(a)}_{ia}=0$ then $R_i\left(D^{0}\right)=0$. Let us consider $M_a\cap P_t$ the set of museums in the consortium $P_t$ visited by the pass holder $a$. By \emph{consortium consistency}, it is glaringly obvious that there must exist at least a museum $i\in M_a\cap P_t$ such that $R_i\left(D^{0}\right)=c>0$. Then, we split the museum $i$ into $h$ new museums such that $\pi^{-i_l}=\frac{\pi^{-i}}{h}$ for each $l\in\{1,\dots,h\}$. Let $D'^{0}$ be the problem where the museum $i$ splits into $h$ new museums. Lemma \ref{lemma.1.} applied to $D'^{0}$ implies that $R_{i_l}\left(D'^{0}\right)=R_{i_r}\left(D'^{0}\right)$ for each pair $l,r\in\{1,\dots,h\}$. By \emph{splitting-proofness of museums}, $R_i\left(D^{0}\right)=\sum_{l=1}^{h}R_{i_l}\left(D'^{0}\right)=hR_{i_1}\left(D'^{0}\right)$. Therefore, $R_{i_1}\left(D'^{0}\right)=\frac{c}{h}$. Now let us consider another museum $j\in M_a\cap P_t$. We can split the museum $j$ into $\displaystyle h_j = \left\lfloor\frac{\pi^{-j}}{\frac{\pi^{-i}}{h}}\right\rfloor+1$ museums\footnote{As $h$ can be as arbitrarily large as we need, we can always assume that $\displaystyle\left\lfloor\frac{\pi^{-j}}{\frac{\pi^{-i}}{h}}\right\rfloor \geq 1$.} such that for each $l\in\{1,\dots, h_j-1\}$, $\pi^{-j_l}=\frac{\pi^{-i}}{h}$ and $\pi^{-j_{h_j}}=\pi^{-j}-\displaystyle \sum_{l=1}^{h_j-1}\pi^{-j_l}$. Let $D''^{0}$ be the problem where the museum $j$ splits into $h_j$ new museums. Lemma \ref{lemma.1.} applied to $D''^{0}$ implies that $R_{j_l}\left(D''^{0}\right)=R_{i_1}\left(D''^{0}\right)$ for each $l\in\{1,\dots,h_j-1\}$ and by \emph{splitting-proofness of museums}
				$$R_{i_1}\left(D''^{0}\right)=R_{i_1}\left(D'^{0}\right)=\frac{c}{h}$$
				and
				$$
				R_j\left(D^{0}\right)=R_j\left(D'^{0}\right)=\sum_{l=1}^{h_j-1}R_{j_l}\left(D''^{0}\right)+R_{j_{h_j}}\left(D''^{0}\right)=\displaystyle \left(h_j-1\right)\frac{c}{h}+R_{j_{h_j}}\left(D''^{0}\right).
				$$
				
				On one hand, by Lemma \ref{lemma.6.}, we have that $R_{j_{h_j}}\left(D''^{0}\right) \leq R_{j_1}\left(D''^{0}\right)=\frac{c}{h}$. Finally, 
				$$\frac{R_i\left(D^{0}\right)}{R_j\left(D^{0}\right)}=\frac{c}{\left(h_j-1\right)\frac{c}{h}+R_{j_{h_j}}\left(D''^{0}\right)}\leq \frac{c}{\left(h_j-1\right)\frac{c}{h}}$$
				On the other hand,
				$$\frac{R_i\left(D^{0}\right)}{R_j\left(D^{0}\right)}=\frac{c}{\left(h_j-1\right)\frac{c}{h}+R_{j_{h_j}}\left(D''^{0}\right)}\geq \frac{c}{\left(h_j-1\right)\frac{c}{h}+\frac{c}{h}}.$$
				
				Therefore, 
				$$
				\frac{1}{\left\lfloor\frac{\pi^{-j}}{\frac{\pi^{-i}}{h}}\right\rfloor \frac{1}{h}+\frac{1}{h}} \leq \frac{R_i\left(D^{0}\right)}{R_j\left(D^{0}\right)} \leq \frac{1}{\left\lfloor\frac{\pi^{-j}}{\frac{\pi^{-i}}{h}}\right\rfloor \frac{1}{h}}.
				$$
				Given that $\displaystyle \lim_{h \rightarrow +\infty} \left\lfloor\frac{\pi^{-j}}{\frac{\pi^{-i}}{h}}\right\rfloor \frac{1}{h}= \frac{\pi^{-j}}{\pi^{-i}}$ then $\displaystyle \frac{R_i\left(D^{0}\right)}{R_j\left(D^{0}\right)} = \frac{\pi^{-i}}{\pi^{-j}}$. Now, since this can be done for any pair of museums in $P_t$, together with Equation \ref{eqn1}, we have for each $i\in P_t$
				$$
				R_i\left(D^{0s}\right)=
				\begin{cases}
					0 & \text{if } i \text{ is dummy in } D^{0} \\
					\dfrac{\pi^{-i}}{\sum_{j \in M_a\cap P_t}\pi^{-j}} \frac{\pi^t}{\sum_{r \in K^{0}_a}\pi^r} & \text{otherwise}
				\end{cases}
				$$ 
				Therefore, $R\left(D^{0}\right)$ and $R^{PP}\left(D^{0}\right)$ coincide. Now, let $(M,P,N,\pi,C) \in \mathcal{D}^{0}$ without any restriction on the cardinality of $N^{0}$. By \emph{composition}, it follows that, for each $i \in M$,
				\begin{align*}
					R_i \left(M,P,N,\pi, C \right) &= \sum_{a \in N^0} R_i \left(M,P,\{a\},\pi, C^{(a)} \right) \\
					&= \sum_{a \in N^0} R^{PP}_i \left(M,P,\{a\},\pi, C^{(a)} \right) \\
					&=R^{PP}_i \left(M,P,N,\pi, C \right)
				\end{align*}
				Therefore, $R$ and $R^{PP}$ coincide in $\mathcal{D}^0$.

				%    Let $D'^{0t}$ be the problem in which all museums $i \in M$ that belongs to partition $P_t$ in $D^{0}$ splits into $\pi^{-i}$ new identical museums with prices equal to one. Once we apply \emph{splitting-proofness of museums} iteratively, we obtain that, for each $i \in P_t$,
				%    $$
				%    R_i\left(D^{0}\right)=\sum_{j=1}^{\pi^{-i}}R_{i_j}\left(D'^{0t}\right)
				%    $$
				%    Any copy $i_j$ is dummy in $D'^{0t}$ if and only if $i$ is dummy in $D^0$. In such a case, \emph{dummy} entails that $R_i(D^0)=0=R^{PP}_i(D^0)$. In the problem $D'^{0t}$ all (non-dummy) copies $i_j$ belong to the same union and have the same individual price.  Lemma \ref{lemma.1.} states that they all must get the same allocation, which implies, together with Equation \ref{eqn1}, that
				%    $$
				%    R_i\left(D^{0}\right) = \frac{\pi^{-i}}{\sum_{j \in M_a \cap P^{(i)}} \pi^{-j}} R_t\left(D^{0s}\right)
				%    $$
				%    Therefore,
				%    $$
				%    R_i\left(D^{0}\right) = \frac{\pi^{-i}}{\sum_{j \in M_a \cap P^{(i)}} \pi^{-j}} \cdot \frac{\pi^t}{\sum_{r \in K^{0s}_a}\pi^r} \pi^0 = R_i^{PP}(D^0)
				%    $$
				%    Now, let $(M,P,N,\pi,C) \in \mathcal{D}^{0}$ without any restriction on the cardinality of $N^{0}$. By \emph{composition}, it follows that, for each $i \in M$,
				% \begin{align*}
					% 	R_i \left(M,P,N,\pi, C \right) &= \sum_{a \in N^0} R_i \left(M,P,\{a\},\pi, C^{(a)} \right) \\
					%        &= \sum_{a \in N^0} R^{PP}_i \left(M,P,\{a\},\pi, C^{(a)} \right) \\
					%        &=R^{PP}_i \left(M,P,N,\pi, C \right)
					% \end{align*}
				% Therefore, $R$ and $R^{PP}$ coincide in $\mathcal{D}^0$.
				\item[(iii)] Let $P_t \in P$, and let $D^t=(M,P,\{a\},\pi,C^{(a)}) \in \mathcal{D}^t$ be a problem with just one of those pass holders. Let $i \in P_t$. If $i \in \NM(D^t)$, \emph{dummy} implies that $R_i(D^t) = 0 = R^{PP}_i(D^t)$. Now, following similar arguments to those used in (ii) above, we have that
				$$
				R_i\left(D^{t}\right) = \frac{\pi^{-i}}{\sum_{j \in M_a} \pi^{-j}} \pi^t = R_i^{PP}(D^t)
				$$
				As in case (ii), by \emph{composition} we extend the coincidence between $R$ and $R^{PP}$ to the whole subclass $D^t$.

				% Let $D'^{t}$ be the problem in which all museums in $P_t$ split into $\pi^{-i}$ new identical museums with prices equal to one. By \emph{splitting-proofness of museums}, for each $i \in P_t$,
				% $$
				% R_i\left(D^t\right)=\sum_{j=1}^{\pi^{-i}}R_{i_j}\left(D'^{t}\right)
				% $$
				% Arguing as above, on one hand, any copy $i_j$ is dummy in $D'^{t}$ if and only if $i$ is dummy in $D^t$, and then $R_i(D^t)=0=R^{PP}_i(D^t)$. On the other hand, all (non-dummy) copies $i_j$ have the same individual price. Lemma \ref{lemma.1.} states that they all must get the same share of $\pi^t$. Therefore,
				% $$
				% R_i\left(D^{t}\right) = \frac{\pi^{-i}}{\sum_{j \in M_a} \pi^{-j}} \pi^t = R_i^{PP}(D^t)
				% $$
				% As in case (ii), by \emph{composition} we extend the coincidence between $R$ and $R^{PP}$ to the whole subclass $D^t$.
				
			\end{itemize}
			In application of Lemma \ref{lemma.4.}, $R$ and $R^{PP}$ coincide in the whole domain $\mathcal{D}$.
		\end{proof}
	\end{theorem}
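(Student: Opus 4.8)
The plan is to prove both implications, mirroring the architecture of Theorem \ref{thm2} but replacing the symmetry-driven egalitarian conclusions by proportionality arguments powered by the two splitting-proofness axioms. For the easy direction I would check the five axioms for $R^{PP}$ by direct computation. \emph{Composition} holds because $R^{PP}_i$ is a sum over the pass holders in $N^0_i$ and $N^{(i)}_i$ plus the individual term $|N^{-i}|\pi^{-i}$, all additive under the disjoint union $N_1\cup N_2$. \emph{Dummy} is immediate, since a dummy museum has $N^0_i=N^{(i)}_i=N^{-i}=\varnothing$ and every summand vanishes. For \emph{consortia consistency} on $D\in\mathcal{D}^0$, the within-consortium weights $\pi^{-i}/\sum_{j\in M_a\cap P_k}\pi^{-j}$ sum to one over the visited museums $i\in M_a\cap P_k$, so the aggregate award of $P_k$ collapses to the consortium-level proportional term, which is exactly $R^{PP}_k$ of the reduced problem. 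Finally, both \emph{splitting-proofness} axioms follow because the defining denominators $\sum_{l\in M_a}\pi^{-l}$, $\sum_{l\in M_a\cap P^{(j)}}\pi^{-l}$ and $\sum_{t\in K^0_a}\pi^t$ are invariant when a museum (resp.\ a consortium) is split into pieces whose prices sum back to the original, leaving every other share untouched.

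For the converse, let $R$ satisfy the five axioms. As in Theorem \ref{thm2}, I would pin $R$ down on the elementary subclasses $\mathcal{D}^{-i}$, $\mathcal{D}^0$ and $\mathcal{D}^t$ and then glue via the decomposition lemma of Appendix A. On $\mathcal{D}^{-i}$, \emph{dummy} zeroes out every $j\neq i$ and the budget constraint forces $R_i=|N^{-i}|\pi^{-i}$. The substantive work is on $\mathcal{D}^0$, where I reduce to a single pass holder $a$ and argue in two layers. At the \textbf{consortium layer} I pass to the reduced problem $D^0_q$ through \emph{consortia consistency}, where each consortium acts as a single museum of price $\pi^t$; \emph{dummy} kills unvisited consortia, and to force the shares of two visited consortia into the ratio $\pi^k:\pi^r$ I split each into many pieces of common small price $\pi^k/h$, use the equal-treatment lemma to equalize the pieces, reassemble via \emph{splitting-proofness of consortia}, and squeeze the ratio between the two floor-function bounds as $h\to\infty$. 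At the \textbf{museum layer}, \emph{consortia consistency} fixes the aggregate award of each $P_t$, and inside $P_t$ the identical splitting argument — now with \emph{splitting-proofness of museums} together with the within-consortium equal-treatment and monotonicity lemmas — forces the individual shares to be proportional to the $\pi^{-i}$. Multiplying the two layers recovers $R^{PP}$ on single pass holders, and \emph{composition} extends this to all of $\mathcal{D}^0$; the subclass $\mathcal{D}^t$ is then handled by the museum-layer argument alone.

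The main obstacle I anticipate is the derivation of \textbf{proportionality} itself. Whereas egalitarianism in Theorem \ref{thm2} falls out directly from symmetry, here proportionality must be manufactured from splitting-proofness, and the awkward point is that the prices need not be commensurable. The floor-function device — splitting an entity into $\lfloor \pi^r/(\pi^k/h)\rfloor$ equal unit pieces plus one remainder piece, bounding the remainder with the monotonicity lemma, and letting $h\to\infty$ so that $\lfloor \pi^r/(\pi^k/h)\rfloor\cdot(1/h)\to \pi^r/\pi^k$ — is precisely what bridges this gap, and making the two-sided estimates pinch to the correct limit is the delicate step. A secondary care point is the interaction of the two layers: \emph{consortia consistency} must legitimately decouple the consortium-level and the within-consortium allocation problems, so that the proportionality proved at each layer combines into the product form defining $R^{PP}$.
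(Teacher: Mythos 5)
Your proposal is correct and follows essentially the same route as the paper's own proof: the same verification of the five axioms for $R^{PP}$, the same decomposition of the converse into the subclasses $\mathcal{D}^{-i}$, $\mathcal{D}^0$, $\mathcal{D}^t$ glued by the composition lemma, and the same two-layer argument on $\mathcal{D}^0$ in which consortia consistency separates the consortium and museum levels and the floor-function splitting device (equal-price pieces plus a remainder bounded by the monotonicity lemmas, then $h\to\infty$) manufactures proportionality from splitting-proofness. The obstacles you flag — incommensurable prices and the decoupling of the two layers — are exactly the points the paper's Lemmas 1, 2, 5, and 6 are designed to handle.
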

	
	The independence of the properties in Theorem \ref{thm3} is proved in the following remark.
	
	\begin{remark}
		\label{remark2}
		The axioms of Theorem \ref{thm3} are independent.
		\begin{itemize}
			\item[(a)] Let $R^3$ be defined as follows. For each $i\in M$
			$$
			R^{3}_{i}(D)=\sum_{a \in N^0_i} \frac{\pi^{-i}}{\sum_{j \in M_a \cap P^{(i)}} \pi^{-j}} \frac{\pi^{(i)}}{\sum_{t \in K^0_a}\pi^t} \pi^0+ \frac{\pi^{-i}\left\{1_{\exists a\in N^{(i)}: C_{ia}=1}\right\}}{\sum_{j\in P^{(i)}}\pi^{-j}\left\{1_{\exists a\in N^{(i)}: C_{ja}=1}\right\}} |N^{(i)}|\pi^{(i)}+|N^{-i}|\pi^{-i}
			$$
			The rule $R^3$ satisfies dummy, splitting-proofness of museums, splitting-proofness of consortia, consortia consistency, but not composition.
			
			\item[(b)] The rule $R^{EP}$ satisfies composition, dummy, splitting-proofness of museums, consortia consistency, but not splitting-proofness of consortia.
			
			\item[(c)] The rule $R^{PE}$ satisfies composition, dummy, splitting-proofness of consortia, consortia consistency, but not splitting-proofness of museums.
			
			\item[(d)] Let $R^4$ be defined as follows. For each $i\in M$
			$$
			R^{4}_{i}(D)=\sum_{a \in N^0_i} \frac{\pi^{-i}}{\sum_{j \in M_a \cap P^{(i)}} \pi^{-j}} \frac{\pi^{(i)}}{\sum_{t \in K^0_a}\pi^t} \pi^0+\sum_{a \in N^{(i)}} \frac{\pi^{-i}}{\sum_{j\in P^{(i)}}\pi^{-j}} \pi^{(i)}+|N^{-i}|\pi^{-i}
			$$
			The rule $R^4$ satisfies composition, splitting-proofness of museums, splitting-proofness of consortia, consortia consistency, but not dummy.
			
			\item[(e)] Let $R^5$ be defined as follows. For all $i\in M$
			$$
			R^{5}_{i}(D)=\sum_{a \in N^0_i} \frac{\pi^{-i}}{\sum_{j \in M_a } \pi^{-j}} \pi^0+\sum_{a \in N^{(i)}_i} \frac{\pi^{-i}}{\sum_{j\in M_a}\pi^{-j}} \pi^{(i)}+|N^{-i}|\pi^{-i}
			$$
			The rule $R^5$ satisfies composition, splitting-proofness of museums, splitting-proofness of consortia, dummy, but not consortia consistency.
		\end{itemize}
	\end{remark}
	
	\begin{theorem}\label{thm4}
		A rule satisfies composition, dummy, symmetry within consortia, splitting-proofness of consortia and consortia consistency if and only if it is the proportional-egalitarian rule.
		\begin{proof}
			We start by showing that the \emph{proportional-egalitarian rule} satisfies the axioms in the statement.
			\begin{itemize}
				\item Composition. Let $(M,P,N,\pi,C) , (M,P,N',\pi,C')  \in \mathcal{D}$ such that $N \cap N' = \emptyset$. Let $i \in M$. It follows that
				\begin{align*}
					R^{PE}_{i}(M,P,N\cup N',\pi,C\cup C') &= \sum_{a \in N^0_i\cup N'^0_i} \frac{1}{|M_a \cap P^{(i)}|} \frac{\pi^{(i)}}{\sum_{t \in K^0_a}\pi^t} \pi^0 \\
					&+ \sum_{a \in N^{(i)}_i\cup N'^{(i)}_i} \frac{1}{|M_a|}\pi^{(i)}+
					|N^{-i}|\pi^{-i}+|N'^{-i}|\pi^{-i} \\
					&= \sum_{a \in N^0_i} \frac{1}{|M_a \cap P^{(i)}|} \frac{\pi^{(i)}}{\sum_{t \in K^0_a}\pi^t} \pi^0+\sum_{a \in N^{(i)}_i} \frac{1}{|M_a|}\pi^{(i)}+|N^{-i}|\pi^{-i} \\
					&+ \sum_{a \in N'^0_i} \frac{1}{|M_a \cap P^{(i)}|} \frac{\pi^{(i)}}{\sum_{t \in K^0_a}\pi^t} \pi^0+\sum_{a \in N'^{(i)}_i} \frac{1}{|M_a|}\pi^{(i)}+|N'^{-i}|\pi^{-i} \\
					&= R^{PE}_{i}(M,P,N,\pi,C) + R^{PE}_{i}(M,P,N',\pi,C') .
				\end{align*}
				\item Dummy. Let $D \in \mathcal{D}$ and $i\in \NM(D)$. Then
				$$
				R^{PE}_{i}(D) = \sum_{a \in N^0_i} \frac{1}{|M_a \cap P^{(i)}|} \frac{\pi^{(i)}}{\sum_{t \in K^0_a}\pi^t} \pi^0+\sum_{a \in N^{(i)}_i} \frac{1}{|M_a|}\pi^{(i)}+|N^{-i}|\pi^{-i}=0.
				$$
				\item Consortia consistency. Let $D \in \mathcal{D}^0$ and $P_k \in P$,
				\begin{align*}
					\sum_{i\in P_k} R^{PE}_{i}(D) &= \sum_{i\in P_kk} \sum_{a \in N^0_i} \frac{1}{|M_a \cap P^{(i)}|} \frac{\pi^{(i)}}{\sum_{t \in K^0_a}\pi^t} \pi^0 = \sum_{a \in N^0:k\in K^0_a} \frac{\pi^{k}}{\sum_{t \in K^0_a}\pi^t} \pi^0 \\
					&= R^{PE}_k\left(\{1,...,s\},\hat{P}^s,\hat{N},\hat{\pi},\hat{C}\right).
				\end{align*}
				\item Symmetry within consortia. Let $D \in \mathcal{D}$ and let $i,j\in M$ such that $P^{(i)}=P^{(j)}$ and satisfy the conditions in the definition of the property. Then 
				\begin{align*}
					R^{PE}_{i}(D) &= \sum_{a \in N^0_i} \frac{1}{|M_a \cap P^{(i)}|} \frac{\pi^{(i)}}{\sum_{t \in K^0_a}\pi^t} \pi^0+\sum_{a \in N^{(i)}_i} \frac{1}{|M_a|}\pi^{(i)}+|N^{-i}|\pi^{-i} \\
					&= \sum_{a \in N^0_j} \frac{1}{|M_a \cap P^{(j)}|} \frac{\pi^{(j)}}{\sum_{t \in K^0_a}\pi^t} \pi^0+\sum_{a \in N^{(j)}_j} \frac{1}{|M_a|}\pi^{(j)}+|N^{-j}|\pi^{-j} \\
					&= R^{PE}_{j}(D). 
				\end{align*}
				
				\item Splitting-proofness of consortia. Let $(M,P,N,\pi,C) \in \mathcal{D}$ and $P_k\in P$ where $P_k=\{i_1,\dots,i_r\}$, and consider $(M',P',N',\pi',C') \in \mathcal{D}$ as it is set in the definition of the property. Then, for each $P_r \in P \backslash \{P_k\}$
				\begin{align*}
					\sum_{j\in P_r}R^{PE}_j(M,P,N,\pi,C)=\\
					\sum_{j\in P_r}\left(\sum_{a \in N^0_j} \frac{1}{|M_a \cap P^{(j)}|} \frac{\pi^{(j)}}{\sum_{t \in K^0_a}\pi^t} \pi^0+\sum_{a \in N^{(j)}_j} \frac{1}{|M_a|}\pi^{(j)}+|N^{-j}|\pi^{-j}\right)=\\
					\sum_{a \in N^0:k\in K^0_a} \frac{\pi^{(k)}}{\sum_{t \in K^0_a}\pi^t} \pi^0+|N^{(r)}|\pi^{(r)}+\sum_{j\in P_r}|N^{-j}|\pi^{-j}=\\
					\sum_{a \in N^0:k\in K'^0_a} \frac{\pi^{(k)}}{\sum_{t \in K^0_a}\pi^t} \pi^0+|N^{(r)}|\pi^{(r)}+\sum_{j\in P_r}|N^{-j}|\pi^{-j}=\\
					\sum_{j\in P_r}R^{PE}_j(M',P',N',\pi',C').
				\end{align*}
				where if $k\in K^0_a$ then $k^h\in K^0_a$ for all $h\in\{1,\dots,t\}$. Since $\pi^{k}=\sum_{h=1}^{t}\pi^{k^h}$ we have $\sum_{t \in K^0_a}\pi^t=\sum_{t \in K'^0_a}\pi^t$. By other way,  if $k\notin K^0_a$ then $ K^0_a= K'^0_a$.
				
			\end{itemize}
			
			Now, we prove the converse. Let $R$ be a rule that satisfies the properties in the statement, and let $(M,P,N,\pi,C) \in \mathcal{D}$. We divide the proof into several steps.
			\begin{itemize}
				\item[(i)] Let $i \in M$, and let $(M,P,N,\pi,C) \in \mathcal{D}^{-i}$. By Lemma \ref{lemma.3.}, $R_i(M,P,N,\pi,C)=\pi^{-i} |N^{-i}|= R^{PE}_i(M,P,N,\pi,C)$. Thus, $R$ and $R^{EP}$ coincide in any subclass of problems $\mathcal{D}^{-i}$.
				\item[(ii)] Let $D^0=(M,P,\{a\},\pi,C^{(a)}) \in \mathcal{D}^0$ be a problem with just one pass holder purchasing the general pass, and let $D^{0s}=(\{1,\dots,s\},\hat{P}^s,\{a\},\hat{\pi},\hat{C}^{(a)})$  its associated quotient problem. Following arguments similar to those used in Case (ii) of the proof of Theorem \ref{thm3}, together with Lemma \ref{lemma.2.} and Lemma \ref{lemma.5.}, we have
				$$
				R_t\left(D^{0s}\right)=R^{PE}_t\left(D^{0s}\right)=
				\begin{cases}
					0 & \text{if } t \text{ is dummy in } D^{0s} \\
					\dfrac{\pi^t}{\sum_{r \in K^{0s}_a}\pi^r} \pi^0 & \text{otherwise}
				\end{cases}
				$$

				%  Let $D'^{0s}$ be the problem in which each union $t \in \{1,\ldots, s\}$ in $D^{0s}$ splits into $\hat{\pi}^{t}$ new identical unions and all prices of each of those new unions equal to 1. Let $t\in \{1,\dots,s\}$, applying \emph{splitting-proofness of consortia} iteratively, we get that 
				%  $$
				%  R_t\left(D^{0s}\right)=\sum_{i=1}^{\hat{\pi}^t}R_{t_i}\left(D'^{0s}\right). 
				%  $$ 
				% Notice that, on one hand, if $t$ is dummy in $D^{0s}$, then any copy $t_i$ is also dummy in $D'^{0s}$. On the other hand, in the problem $D'^{0s}$, all (non-dummy) unions are singletons with the same price. Lemma \ref{lemma.2.} applied to $D'^{0s}$ implies that they all must get the same allocation. Therefore, as $\hat{\pi}^t= \pi^t$ and $\hat{\pi}^r= \pi^r$,
				% $$
				% R_t\left(D^{0s}\right)=R^{PE}_t\left(D^{0s}\right)=
				% \begin{cases}
					% 0 & \text{if } t \text{ is dummy in } D^{0s} \\
					% \dfrac{\pi^t}{\sum_{r \in K^{0s}_a}\pi^r} \pi^0 & \text{otherwise}
					% \end{cases}
				% $$ 

				Now, consider again the museum pass problem $D^{0}$. For each partition $P_t \in P$, the \emph{consortia consistency} requires that
				\begin{equation}\label{eqn2}
					\sum_{i\in P_t}R_i\left(D^{0}\right)=R_t\left(D^{0s}\right)
				\end{equation}
				Within consortium $t$, any museum $i \in P_t$ is either dummy or symmetric to any other non-dummy museum. \emph{Dummy} and \emph{symmetry within consortia} together imply that all dummy museums obtain zero, while the others receive equal shares. That is, if $\alpha \in \mathbb{R}_+$ denotes that equal share, it must hold that
				$$
				\sum_{i\in P_t}R_i\left(D^{0}\right) = \alpha |M_a \cap P_t|
				$$
				If we combine the previous expression with Equation \ref{eqn2} we obtain that
				$$
				\alpha = \frac{1}{|M_a \cap P_t|} \cdot R_t\left(D^{0s}\right)
				$$
				Therefore, if $i \in M$ is dummy, then $R_i(D^0)=0=R^{PE}(D^0)$. Otherwise,
				$$
				R_i(D^0)=\frac{1}{|M_a \cap P^{(i)}|} \cdot \frac{\pi^{(i)}}{\sum_{r \in K^{0s}_a}\pi^r} \pi^0 = R^{PE}_i(D^0).
				$$
				Now, let $(M,P,N,\pi,C) \in \mathcal{D}^{0}$ without any restriction on the cardinality of $N^{0}$. By \emph{composition}, it follows that, for each $i \in M$,
				\begin{align*}
					R_i \left(M,P,N,\pi, C \right) &= \sum_{a \in N^0} R_i \left(M,P,\{a\},\pi, C^{(a)} \right) \\
					&= \sum_{a \in N^0} R^{PP}_i \left(M,P,\{a\},\pi, C^{(a)} \right) \\
					&=R^{PE}_i \left(M,P,N,\pi, C \right)
				\end{align*}
				Therefore, $R$ and $R^{PE}$ coincide in $\mathcal{D}^0$.
				\item[(iii)] Let $P_t \in P$, and let $(M,P,N,\pi,C) \in \mathcal{D}^t$. By  an argument analogous to Case (iii) in Theorem \ref{thm2}, we can obtain that, for each $i \in P_t$,
				$$
				R_i(M,P,N,\pi,C)=\sum_{a \in N^{t}_i} \frac{1}{|M_a|}\pi^{t} = R^{PE}_i(M,P,N,\pi,C)
				$$
			\end{itemize}
			In application of Lemma \ref{lemma.4.}, $R$ and $R^{PE}$ coincide in the whole domain $\mathcal{D}$.
		\end{proof}
	\end{theorem}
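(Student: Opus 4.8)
The plan is to prove both implications, reusing for the ``only if'' part the two-stage decomposition already deployed in Theorems \ref{thm2} and \ref{thm3}: the proportional first-stage argument is imported from Theorem \ref{thm3} and the egalitarian second-stage argument from Theorem \ref{thm2}, spliced together through \emph{consortia consistency}.

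For the ``if'' direction I would verify directly that $R^{PE}$ satisfies the five axioms. \emph{Composition} and \emph{dummy} are immediate from the additive, visitor-gated form of the formula. For \emph{consortia consistency} I would restrict to $\mathcal{D}^0$, sum $R^{PE}_i$ over $i\in P_k$, and note that the egalitarian weights $1/|M_a\cap P^{(i)}|$ add up to $1$ over the visited museums of $P_k$, leaving precisely the first-stage proportional term $\frac{\pi^k}{\sum_{t\in K^0_a}\pi^t}\pi^0$, which is $R^{PE}_k$ on the reduced problem. For \emph{symmetry within consortia} the three hypotheses force the three summands of $R^{PE}_i$ and $R^{PE}_j$ to agree term by term, since the egalitarian second stage makes within-consortium shares price-independent. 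For \emph{splitting-proofness of consortia} the point is that the first-stage weight of a consortium $P_r$, namely $\pi^{(r)}/\sum_t\pi^t$, is invariant when another consortium is split into pieces whose prices sum to the original, so $\sum_{j\in P_r}R^{PE}_j$ does not change.

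The substance is the converse. Let $R$ satisfy the five axioms; following the template of the earlier theorems, I would use Lemma \ref{lemma.4.} to reduce to the subclasses $\mathcal{D}^{-i}$, $\mathcal{D}^0$ and $\mathcal{D}^t$, pin $R$ down on each, and glue by \emph{composition}. On $\mathcal{D}^{-i}$, Lemma \ref{lemma.3.} forces $R_i=\pi^{-i}|N^{-i}|$, matching $R^{PE}$. On $\mathcal{D}^t$, the argument of Case (iii) of Theorem \ref{thm2} carries over unchanged: \emph{dummy} annihilates non-visited museums and \emph{symmetry within consortia} equalizes the rest, yielding the egalitarian split $\sum_{a\in N^t_i}\pi^t/|M_a|$. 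The crux is $\mathcal{D}^0$. For a single general-pass holder $a$, I would first pass to the reduced problem $D^{0s}$ via \emph{consortia consistency} and establish the \emph{proportional} allocation across consortia, $R_t(D^{0s})=\frac{\pi^t}{\sum_{r\in K^0_a}\pi^r}\pi^0$ for each non-dummy $t$. This is obtained exactly as in Case (ii) of Theorem \ref{thm3}: split a consortium into equal-price clones (Lemma \ref{lemma.2.}), compare two consortia through the floor/limit estimate of Lemma \ref{lemma.5.}, and invoke \emph{splitting-proofness of consortia} to force proportionality in the limit. With the first-stage totals fixed, I would then open each consortium: \emph{dummy} and \emph{symmetry within consortia} give every non-dummy museum of $P_t$ a common share, while \emph{consortia consistency} equates $\sum_{i\in P_t}R_i(D^0)$ with $R_t(D^{0s})$, so that share equals $\frac{1}{|M_a\cap P_t|}R_t(D^{0s})=R^{PE}_i(D^0)$. \emph{Composition} lifts this to arbitrary $N^0$, and Lemma \ref{lemma.4.} assembles the three subclasses into $R=R^{PE}$ on all of $\mathcal{D}$.

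I expect the main obstacle to be the first-stage proportionality on $\mathcal{D}^0$. Unlike the symmetry-based steps, proportionality must be extracted from \emph{splitting-proofness of consortia} through an approximation scheme, and the delicate point is controlling the residual clone via Lemma \ref{lemma.5.} and checking that $\lfloor \pi^r/(\pi^k/h)\rfloor\,(1/h)\to\pi^r/\pi^k$ squeezes the ratio $R_k(D^{0s})/R_r(D^{0s})$ to $\pi^k/\pi^r$. Once this proportional backbone is secured, combining it with the egalitarian within-consortium step is conceptually new for this rule but technically light, and the remaining subclasses follow the earlier patterns.
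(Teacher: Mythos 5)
Your proposal is correct and follows essentially the same route as the paper's proof: the same direct verification of the five axioms, and for the converse the same decomposition via Lemma \ref{lemma.4.} into $\mathcal{D}^{-i}$ (Lemma \ref{lemma.3.}), $\mathcal{D}^0$ (first-stage proportionality across consortia imported from Case (ii) of Theorem \ref{thm3} via Lemmas \ref{lemma.2.} and \ref{lemma.5.}, then \emph{consortia consistency} plus \emph{dummy} and \emph{symmetry within consortia} for the egalitarian second stage), and $\mathcal{D}^t$ (the Case (iii) argument of Theorem \ref{thm2}), glued by \emph{composition}. You also correctly identified the floor/limit squeeze as the technical crux, which is exactly where the paper's proof delegates to the approximation scheme of Theorem \ref{thm3}.
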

	
	The independence of the properties in Theorem \ref{thm4} is proved in the following remark.
	
	\begin{remark}
		\label{remark3}
		The axioms of Theorem \ref{thm4} are independent.
		\begin{itemize}
			\item[(a)] Let $R^6$ be defined as follows. For each $i\in M$
			$$
			R^{6}_{i}(D)=\sum_{a \in N^0_i} \frac{1}{|M_a \cap P^{(i)}|} \frac{\pi^{(i)}}{\sum_{t \in K^0_a}\pi^t} \pi^0+\frac{\sum_{a\in N^{(i)}}C_{ia}}{\sum_{j\in P^{(i)}}\sum_{a\in N^{(i)}}C_{ja}}|N^{(i)}|\pi^{(i)}+|N^{-i}|\pi^{-i}
			$$
			
			The rule $R^6$ satisfies dummy, symmetry within consortia, splitting-proofness of consortia and consortia consistency, but not composition.
			
			\item[(b)] The rule $R^{EE}$ satisfies composition, dummy, symmetry within consortia, consortia consistency, but not splitting-proofness of consortia.
			
			\item[(c)] The rule $R^{PP}$ satisfies composition, dummy, splitting-proofness of consortia, consortia consistency, but not symmetry within consortia.
			
			\item[(d)] Let $R^7$ be defined as follows. For each $i\in M$
			
			$$
			R^{7}_{i}(D)=\sum_{a \in N^0_i} \frac{1}{|M_a \cap P^{(i)}|} \frac{\pi^{(i)}}{\sum_{t \in K^0_a}\pi^t} \pi^0+\sum_{a \in N^{(i)}} \frac{1}{|P^{(i)}|}\pi^{(i)}+|N^{-i}|\pi^{-i}
			$$
			
			The rule $R^7$ satisfies composition, symmetry within consortia, splitting-proofness of consortia, consortia consistency, but not dummy.
			
			\item[(e)] Let $R^8$ be defined as follows. For each $i\in M$
			$$
			R^{8}_{i}(D)=\sum_{a \in N^0_i} \frac{1}{|M_a \cap P^{(i)}|} \frac{\sum_{j\in P^{(i)}}\pi^{-j}}{\sum_{t \in K^0_a}\sum_{j\in P^{t}}\pi^{-j}} \pi^0+\sum_{a \in N^{(i)}_i} \frac{\pi^{-i}}{\sum_{j\in M_a}\pi^{-j}} \pi^{(i)}+|N^{-i}|\pi^{-i}
			$$
			The rule $R^8$ satisfies satisfies composition, symmetry within consortia, splitting-proofness of consortia, dummy, but not consortia consistency.
		\end{itemize}
	\end{remark}

	\begin{theorem}\label{thm5}
		A rule satisfies composition, dummy, symmetry between consortia, and splitting-proofness of museums if and only if it is the egalitarian-proportional rule.
		\begin{proof}
			We start by showing that the \emph{egalitarian-proportional rule} satisfies the axioms in the statement.
			\begin{itemize}
				\item Composition. Let $(M,P,N,\pi,C) , (M,P,N',\pi,C')  \in \mathcal{D}$ such that $N \cap N' = \emptyset$. Let $i \in M$. It follows that
				\begin{align*}
					R^{EP}_{i}(M,P,N\cup N',\pi,C\cup C') &= \sum_{a \in N^0_i\cup N'^0_i}\frac{\pi^{-i}}{\sum_{j \in M_a \cap P^{(i)}} \pi^{-j}} \frac{1}{|K^0_a|} \pi^0 \\
					&+ \sum_{a \in N^{(i)}_i\cup N'^{(i)}_i}  \frac{\pi^{-i}}{\sum_{j\in M_a}\pi^{-j}} \pi^{(i)} + |N^{-i}|\pi^{-i}+|N'^{-i}|\pi^{-i} \\
					&= \sum_{a \in N^0_i} \frac{\pi^{-i}}{\sum_{j \in M_a \cap P^{(i)}} \pi^{-j}} \frac{1}{|K^0_a|} \pi^0+\sum_{a \in N^{(i)}_i} \frac{\pi^{-i}}{\sum_{j\in M_a}\pi^{-j}} \pi^{(i)}+|N^{-i}|\pi^{-i} \\
					&+ \sum_{a \in N'^0_i} \frac{\pi^{-i}}{\sum_{j \in M_a \cap P^{(i)}} \pi^{-j}} \frac{1}{|K^0_a|} \pi^0+\sum_{a \in N'^{(i)}_i} \frac{\pi^{-i}}{\sum_{j\in M_a}\pi^{-j}} \pi^{(i)}+|N'^{-i}|\pi^{-i} \\
					&= R^{EP}_{i}(M,P,N,\pi,C) + R^{EP}_{i}(M,P,N',\pi,C') .
				\end{align*}
				\item Dummy. Let $D \in \mathcal{D}$ and $i\in \NM(D)$. Then
				$$
				R^{EP}_{i}(D) = \sum_{a \in N^0_i} \frac{\pi^{-i}}{\sum_{j \in M_a \cap P^{(i)}} \pi^{-j}} \frac{1}{|K^0_a|} \pi^0+\sum_{a \in N^{(i)}_i} \frac{\pi^{-i}}{\sum_{j\in M_a}\pi^{-j}} \pi^{(i)}+|N^{-i}|\pi^{-i}=0.
				$$
				\item Symmetry between consortia. Let $D \in \mathcal{D}$ and let $P_r,P_t\in P$ that satisfy the conditions in the definition of the property. Then
				\begin{align*}
					\sum_{i\in P_r} R^{EP}_{i}(D) &= \sum_{i\in P_r}\sum_{a \in N^0_i} \frac{\pi^{-i}}{\sum_{j \in M_a \cap P^{(i)}} \pi^{-j}} \frac{1}{|K^0_a|} \pi^0 + \sum_{i\in P_r}\sum_{a \in N^{(i)}_i} \frac{\pi^{-i}}{\sum_{j\in M_a}\pi^{-j}} \pi^{(i)}+ \sum_{i\in P_r}|N^{-i}|\pi^{-i} \\
					&= \sum_{a \in N^0:r\in K^0_a} \frac{1}{|K^0_a|} \pi^0+\pi^r|N^r|+\sum_{j\in P_t}|N^{-j}|\pi^{-j} \\
					&= \sum_{a \in N^0:t\in K^0_a} \frac{1}{|K^0_a|} \pi^0+\pi^t|N^t|+\sum_{j\in P_t}|N^{-j}|\pi^{-j} \\
					&= \sum_{i\in P_t}\sum_{a \in N^0_i} \frac{\pi^{-i}}{\sum_{j \in M_a \cap P^{(i)}} \pi^{-j}} \frac{1}{|K^0_a|} \pi^0 + \sum_{i\in P_t}\sum_{a \in N^{(i)}_i} \frac{\pi^{-i}}{\sum_{j\in M_a}\pi^{-j}} \pi^{(i)}+ \sum_{i\in P_t}|N^{-i}|\pi^{-i} \\
					&= \sum_{j\in P_t}  R^{EP}_{j}(M,P,N,\pi,C). 
				\end{align*}
				\item Splitting-proofness of museums. Let $D \in \mathcal{D}^0$ and $i \in M$. Consider $(M',P',N',\pi',C') \in \mathcal{D}$ as it is described in the definition of the property. Then, for each $j\in M\backslash\{i\}$, if $j\notin P^{(i)}$,
				\begin{align*}
					R^{EP}_{j}(D) &= \sum_{a \in N^0_j} \frac{\pi^{-j}}{\sum_{l \in M_a \cap P^{(j)}} \pi^{-l}} \frac{1}{|K^0_a|} \pi^0+\sum_{a \in N^{(j)}_j} \frac{\pi^{-j}}{\sum_{l\in M_a}\pi^{-l}} \pi^{(j)}+|N^{-j}|\pi^{-j} \\
					&= R^{EP}_{j}(M',P',N',\pi',C').
				\end{align*}
				If $j\in P^{(i)}$,
				\begin{align*}
					R^{EP}_{j}(D) &= \sum_{a \in N^0_j} \frac{\pi^{-j}}{\sum_{l \in M_a \cap P^{(j)}} \pi^{-l}} \frac{1}{|K^0_a|} \pi^0+\sum_{a \in N^{(j)}_j} \frac{\pi^{-j}}{\sum_{l\in M_a}\pi^{-l}} \pi^{(j)}+|N^{-j}|\pi^{-j} \\
					&= \sum_{a \in N^0_j} \frac{\pi^{-j}}{\sum_{l \in M'_a \cap P'^{(j)}} \pi^{-l}} \frac{1}{|K^0_a|} \pi^0+\sum_{a \in N^{(j)}_j} \frac{\pi^{-j}}{\sum_{l\in M'_a}\pi^{-l}} \pi^{(j)}+|N^{-j}|\pi^{-j} \\
					&= R^{EP}_j(M',P',N',\pi',C')
				\end{align*}
				where $P'^{(j)}=\left(P^{(i)}\backslash\{i\}\right)\cup\{i_1,\dots,i_r\}$ and  for each $a\in N^{\sigma}_j$ with $\sigma\in\{0,(j)\}$, $M'_a=\{l\in M':C'^{\sigma}_{la}=1\}$. Therefore, if $i\in M_a$ then $i_h\in M'_a$ for all $h\in\{1,\dots,r\}$. Since $\pi^{-i}=\sum_{h=1}^{r}\pi^{-i_h}$ we have $\sum_{l\in M_a}\pi^{-l}=\sum_{l\in M'_a}\pi^{-l}$.
			\end{itemize}

			Now, we prove the converse. Let $R$ be a rule that satisfies the properties in the statement, and let $(M,P,N,\pi,C) \in \mathcal{D}$. We divide the proof into several steps.
			\begin{itemize}
				\item[(i)] Let $i \in M$, and let $(M,P,N,\pi,C) \in \mathcal{D}^{-i}$. By Lemma \ref{lemma.3.}, $R_i(M,P,N,\pi,C)=\pi^{-i} |N^{-i}|= R^{EP}_i(M,P,N,\pi,C)$. Thus, $R$ and $R^{EP}$ coincide in any subclass of problems $\mathcal{D}^{-i}$.
				\item[(ii)] Let $D^0=(M,P,\{a\},\pi,C^{(a)}) \in \mathcal{D}^0$ be a problem with just one pass holder purchasing the general pass. As in case (ii) in Theorem \ref{thm2}, \emph{dummy} and \emph{symmetry between consortia} together imply that each consortium gets an equal share of $\pi^0$, i.e. for $P_t \in P$,
				$$
				\sum_{i \in P_t} R_i(D^0) = \frac{\pi^0}{|K^0_a|}
				$$
				% It remains to be seen how that amount is distributed among the museums in $P_t$. Indeed, let $D'^{0t} \in \mathcal{D}^0$ be the problem in which all museums in $P_t$ splits into $\pi^{-i}$ new identical museums with prices equal to one. An iterative application of \emph{splitting-proofness of museums} requires that, for each $i \in P_t$,
				% $$
				% R_i\left(D^{0}\right)=\sum_{j=1}^{\pi^{-i}}R_{i_j}\left(D'^{0t}\right)
				% $$
				% Notice that, in $D'^{0t}$ all (non-dummy) copies $i_j$ belong to the same consortium and have the same individual price. 
				
				By applying a similar reasoning to that in Case (ii) of the proof of Theorem \ref{thm3}, together with Lemmas \ref{lemma.1.} and \ref{lemma.6.}, we have that, for each $i \in P_t$,
				$$
				R_i\left(D^{0}\right) = \frac{\pi^{-i}}{\sum_{j \in M_a \cap P_t} \pi^{-j}} \frac{\pi^0}{|K^0_a|} = R^{EP}(D^0)
				$$
				By \emph{composition} we can extend the previous equivalence to the whole subclass $\mathcal{D}^0$. Indeed, let $(M,P,N,\pi,C) \in \mathcal{D}^{0}$, then, for each $i \in M$, 
				\begin{align*}
					R_i \left(M,P,N,\pi, C \right) &= \sum_{a \in N^0} R_i \left(M,P,\{a\},\pi, C^{(a)} \right) \\
					&= \sum_{a \in N^0} R^{EP}_i \left(M,P,\{a\},\pi, C^{(a)} \right) \\
					&=R^{EP}_i \left(M,P,N,\pi, C \right)
				\end{align*}
				Therefore, $R$ and $R^{EP}$ coincide in $\mathcal{D}^0$.     
				\item[(iii)] Let $P_t \in P$, and let $D=(M,P,N,\pi,C^{(a)}) \in \mathcal{D}^t$. As $R$ satisfies \emph{dummy} and \emph{splitting-proofness of museums}, we can construct a reasoning similar to Case (iii) in Theorem \ref{thm3} to obtain that, for each $i \in P_t$, 
				$$
				R_i(D) = \sum_{a \in N^t} \frac{\pi^{-i}}{\sum_{j\in M_a}\pi^{-j}} \pi^t  = R_i^{EP}(D)
				$$
			\end{itemize}
			In application of Lemma \ref{lemma.4.}, $R$ and $R^{EP}$ coincide in the whole domain $\mathcal{D}$.
		\end{proof}
	\end{theorem}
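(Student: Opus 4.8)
The plan is to follow the same two-part template used for Theorems \ref{thm2}--\ref{thm4}: first check that $R^{EP}$ satisfies the four axioms, then prove uniqueness by pinning any admissible rule down on the elementary subclasses $\mathcal{D}^{-i}$, $\mathcal{D}^0$ and $\mathcal{D}^t$, and gluing the pieces together with composition through Lemma \ref{lemma.4.}.

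For the implication that $R^{EP}$ satisfies the properties, composition and dummy are immediate, since $R^{EP}_i$ is a sum over pass holders whose terms all vanish when $i$ has no visitors. For symmetry between consortia I would aggregate $R^{EP}$ over a whole consortium $P_r$: because for a fixed pass holder the proportional weights $\pi^{-i}/\sum_j\pi^{-j}$ sum to one inside each consortium, the global-pass term collapses to $\sum_{a\in N^0:\, r\in K^0_a}\frac{1}{|K^0_a|}\pi^0$, the consortium-pass term to $|N^r|\pi^r$, and the individual term to $\sum_{i\in P_r}|N^{-i}|\pi^{-i}$; conditions (i)--(iii) then equate these with the corresponding sums for $P_t$. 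For splitting-proofness of museums, the only data that can change when $i$ splits into $i_1,\dots,i_r$ are the price denominators $\sum_{l\in M_a\cap P^{(j)}}\pi^{-l}$ and $\sum_{l\in M_a}\pi^{-l}$, and these are invariant because $\sum_h\pi^{-i_h}=\pi^{-i}$; hence no other museum's allocation moves.

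For uniqueness I would argue in three steps. On $\mathcal{D}^{-i}$, Lemma \ref{lemma.3.} forces $R_i=\pi^{-i}|N^{-i}|=R^{EP}_i$. On $\mathcal{D}^0$ with a single global-pass holder $a$, dummy together with symmetry between consortia force every visited consortium to receive the equal aggregate share $\pi^0/|K^0_a|$, exactly as in Case (ii) of Theorem \ref{thm2}; it then remains to split that share among the museums inside each visited consortium. Here I would borrow the replication argument from Case (ii) of Theorem \ref{thm3}: splitting a non-dummy museum into equal-price copies forces equal treatment via Lemma \ref{lemma.1.}, and a squeeze between floor approximations (bounded by Lemma \ref{lemma.6.}) drives the intra-consortium shares to be proportional to individual prices, so that $R_i(D^0)=\frac{\pi^{-i}}{\sum_{j\in M_a\cap P^{(i)}}\pi^{-j}}\frac{\pi^0}{|K^0_a|}$; composition then extends this to all of $\mathcal{D}^0$. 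Finally, on $\mathcal{D}^t$, dummy and splitting-proofness of museums yield, by the argument of Case (iii) of Theorem \ref{thm3}, the proportional distribution $R_i=\sum_{a\in N^t_i}\frac{\pi^{-i}}{\sum_{j\in M_a}\pi^{-j}}\pi^t$ of each consortium pass.

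I expect the only genuine obstacle to be the intra-consortium step on $\mathcal{D}^0$. Splitting-proofness of museums only asserts that the \emph{other} museums' shares are unaffected, so converting it into an exact proportionality statement for the museums being split requires the delicate replication-and-limit machinery: equal treatment of equal-price copies (Lemma \ref{lemma.1.}), the one-sided bound of Lemma \ref{lemma.6.}, and the floor-function squeeze as the number of copies $h\to\infty$. Everything else---the composition aggregation, the telescoping behind symmetry between consortia, and the final assembly via Lemma \ref{lemma.4.}---is routine once this proportionality is established.
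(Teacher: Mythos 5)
Your proposal is correct and follows essentially the same route as the paper's proof: the same verification computations (telescoping the proportional weights for symmetry between consortia, denominator invariance for splitting-proofness of museums), and the same uniqueness argument via Lemma \ref{lemma.3.} on $\mathcal{D}^{-i}$, symmetry between consortia plus the replication-and-squeeze machinery of Lemmas \ref{lemma.1.} and \ref{lemma.6.} on $\mathcal{D}^0$, the Case (iii) argument of Theorem \ref{thm3} on $\mathcal{D}^t$, and the final gluing by Lemma \ref{lemma.4.}. You also correctly identify the intra-consortium proportionality step as the only non-routine part, which is precisely where the paper invokes the floor-function limit argument.
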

	
	The independence of the properties in Theorem \ref{thm5} is proved in the following remark.
	
	\begin{remark}
		\label{remark4}
		The axioms of Theorem \ref{thm5} are independent.
		\begin{itemize}
			\item[(a)] Let $R^9$ be defined as follows. For each $i\in M$
			$$
			R^{9}_{i}(D)=\sum_{a \in N^0_i} \frac{\pi^{-i}}{\sum_{j \in M_a \cap P^{(i)}} \pi^{-j}} \frac{1}{|K^0_a|} \pi^0+ \frac{\pi^{-i}\left\{1_{C_{ia}=1:a\in N^{(i)}}\right\}}{\sum_{j\in P^{(i)}}\pi^{-j}\left\{1_{C_{ja}=1:a\in N^{(i)}}\right\}} |N^{(i)}|\pi^{(i)}+|N^{-i}|\pi^{-i}
			$$
			
			The rule $R^9$ satisfies dummy, symmetry between consortia, and splitting-proofness of museums, but not composition.
			
			\item[(b)] The rule $R^{EE}$ satisfies composition, dummy, symmetry between consortia, but not splitting-proofness of museums.
			
			\item[(c)] The rule $R^{PP}$ satisfies composition, dummy, splitting-proofness of museums, but not symmetry between consortia.
			
			\item[(d)] Let $R^{10}$ be define as follows. For each$i\in M$
			$$
			R^{10}_{i}(D)=\sum_{a \in N^0} \frac{\pi^{-i}}{\sum_{j \in P^{(i)}} \pi^{-j}} \frac{1}{|K^0_a|} \pi^0+ \sum_{a \in N^{(i)}_i} \frac{\pi^{-i}}{\sum_{j\in M_a}\pi^{-j}} \pi^{(i)} +|N^{-i}|\pi^{-i}
			$$
			
			The rule $R^{10}$ satisfies satisfies composition, splitting-proofness of museums, symmetry between consortia, but not dummy.

		\end{itemize}
	\end{remark}

	\section{Final remarks}
	
	We have extended the fundamental museum pass problem by incorporating a more sophisticated market structure, allowing museums to be organized into multiple programs or consortia. Within this framework, we have proposed four different rules to distribute the overall revenue that is obtained from selling all the passes. These rules integrate the principles of egalitarianism and proportionality (relative to either consortium prices or individual prices) through a two-stage mechanism. Initially, revenue is distributed among consortia, followed by distribution among museums within each consortium. In accordance with this approach, four distinct rules are defined: the egalitarian-egalitarian, the proportional-proportional, the egalitarian-proportional, and the proportional-egalitarian rules. We employ an axiomatic methodology to establish the normative foundations of these allocation methods. The axioms examined in this paper are categorized into two groups: fairness and stability. Appropriate combinations of these principles characterize the four rules. In particular, the egalitarian-egalitarian rule is characterized by composition, dummy, symmetry within consortia, and symmetry between consortia (Theorem \ref{thm2}). In Theorem \ref{thm3} we show that replacing both symmetry requirements with splitting-proofness of museums, splitting-proofness of consortia, and consortia consistency univocally identifies the proportional-proportional rule. Furthermore, we find that integrating fairness and non-manipulability requirements is not only feasible but also leads to precise revenue distribution methods. Thus, composition, dummy, symmetry within consortia, splitting-proofness of consortia and consortia consistency characterize the proportional-egalitarian rule (Theorem \ref{thm4}). Dually, composition, dummy, symmetry between consortia, and splitting-proofness of museums characterize the egalitarian-proportional rule (Theorem \ref{thm5}).
	
	It is worth noting that, although all four rules satisfy consortia consistency, this axiom is only essential for the characterizations of the proportional-proportional and proportional-egalitarian rules (Theorems \ref{thm3} and \ref{thm4}). While it is possible to identify rules that satisfy the properties other than consortia consistency in Theorems \ref{thm3} and \ref{thm4}, the resulting families lack a clear formulation. Moreover, beyond the technical details, their practical interpretation remains ambiguous.
	
	In addition to the characterization of the egalitarian-egalitarian rule, this method finds further justification through its connection to cooperative games. Several authors have proposed different solution concepts for cooperative games with a priori unions (e.g. \cite{AlonsoMeijide20} and \cite{LorenzoFreire16}). Among those, the \textit{Owen value} (\cite{Owen77}) stands out as the most prominent one. Interestingly, the egalitarian-egalitarian rules coincides with the Owen value of the game $(M,v,P)$ where $M$ is the set of museums, $P$ is the set of consortia, and $v$ is such that, for each $S \subseteq M$, 
	$$
	v(S)=\pi^0|\{a\in N^0; M^0_a\subseteq S\}|+\sum_{k=1}^{s}\pi^k|\{a\in N^k; M^k_a\subseteq S\}|+\sum_{k=-m}^{-1}\pi^k|\{a\in N^{k}; M^{k}_a\subseteq S\}|
	$$
	
	The models proposed by \cite{Ginsburgh01} and \cite{Bergantinos15} primarily focus on the distribution of revenues generated exclusively from general passes, excluding the existence of consortia. In our terminology, their rules are only applicable within the subdomain $\mathcal{D}^0$. Nevertheless, there is a relationship between the rules we characterize in this paper and those proposed by the aforementioned authors, provided that appropriate assumptions are made and the analysis is confined to the subclass of problems $\mathcal{D}^0$. The egalitarian-egalitarian rule coincides with the \emph{Shapley rule} as discussed in these two papers when $P=\{\{1\}, \dots, \{m\}\}$. The egalitarian-proportional rule also coincides with the Shapley rule under the same condition, and assuming that the prices of the consortia passes are uniform. Still within $\mathcal{D}^0$, \cite{Bergantinos15} introduces the \emph{$p$-Shapley rule}, which is a particular case of the proportional-egalitarian rule (and the proportional-proportional rule) when $P=\{\{1\}, \dots, \{m\}\}$ (and, in addition, the prices of all consortia passes are equal). To some extent, our proposals provide a coherent extension of the rules established in these prior works.  
	
	Finally, we acknowledge our model may have several potential extensions. Perhaps one of the most straightforward relates to the assumption that the set of consortia must be a partition of the set of museums. This premise implies that a museum can only belong to one consortium, aside from participating in the general pass. We believe that, provided the global and individual passes remain feasible, this assumption could be relaxed to incorporate more complex market structures. The mechanisms behind the proposed rules could be adapted to this new framework. The core principles of fairness and stability that support the axioms would remain intact, although their formal statements and definitions would need to be reformulated. However, we caution that this could significantly the complexity of the notation in the definitions and proofs, potentially making the paper more challenging to read without introducing genuinely innovative or distinguishing ideas. 
	
	\section*{Statements and Declarations}
	
	\textbf{Conflicts of interest} The authors state that there is no conflict of interest.
	
	\newpage
	
	\section*{Appendix A. Lemmas}
	
	\begin{lemma}\label{lemma.4.}
		Let $R$ and $R'$ be two rules that satisfy composition. If $R$ and $R'$ coincide in all the subclasses of problems $\mathcal{D}^{-m}, \ldots, \mathcal{D}^{-1}, \mathcal{D}^{o}, \mathcal{D}^{1}, \ldots, \mathcal{D}^{s}$, then they also coincide in the general domain of problems $\mathcal{D}$.
		\begin{proof}
			Let $(M,P,N,\pi,C) \in \mathcal{D}$, and let $i \in M$. In application of \emph{composition},
			\begin{align*}
				R_i \left(M,P,N,\pi, C \right) &= \sum_{i=1}^m R_i \left(M,P,\overline{N}^{-i},\pi, \overline{C}^{-i} \right) + R_i \left(M,P,\overline{N}^{0},\pi, \overline{C}^{0} \right) \\
				&+ \sum_{t=1}^s R_i \left(M,P,\overline{N}^{t},\pi, \overline{C}^{t} \right),
			\end{align*}
			where each $\overline{N}^\sigma$ with $\sigma\in\{-i,0,t\}$ is the set of pass holders that only purchase (individual, global or consortium) passes to access $\sigma$. Notice that $\left(M,P,\overline{N}^{-i},\pi, \overline{C}^{-i}  \right) \in \mathcal{D}^{-i}$, $\left(M,P,\overline{N}^{0},\pi, \overline{C}^{0} \right) \in \mathcal{D}^0$, and $\left( M,P,\overline{N}^{t},\pi, \overline{C}^{t} \right) \in \mathcal{D}^t$. By assumption, both $R$ and $R'$ coincide in these subdomains. Therefore, 
			\begin{align*}
				R_i \left(M,P,N,\pi, C \right) &= \sum_{i=1}^m R'_i \left(M,P,\overline{N}^{-i},\pi, \overline{C}^{-i} \right) + R'_i \left(M,P,\overline{N}^{0},\pi, \overline{C}^{0} \right) +\sum_{t=1}^s R'_i \left(M,P,\overline{N}^{t},\pi, \overline{C}^{t} \right) \\
				&= R'_i \left(M,P,N,\pi, C \right)
			\end{align*}
		\end{proof}
	\end{lemma}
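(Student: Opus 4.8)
The plan is to leverage \emph{composition} to reduce an arbitrary problem to the special subclasses on which $R$ and $R'$ are already assumed to agree. The starting observation is that the list of pass holders in any $D=(M,P,N,\pi,C)\in\mathcal{D}$ is, by construction, a disjoint union of the buyer groups $N^{-m},\dots,N^{-1},N^{0},N^{1},\dots,N^{s}$, since the model stipulates that each pass holder belongs to exactly one $\hat{N}\in N$. For each index $\sigma\in\{-m,\dots,-1,0,1,\dots,s\}$ I would form the auxiliary problem in which only the group $N^{\sigma}$ is retained and every other buyer set is emptied, carrying along the matching columns of the consumption matrices. By the definitions of the subclasses, such a problem lies in $\mathcal{D}^{-i}$ when $\sigma=-i$, in $\mathcal{D}^{0}$ when $\sigma=0$, and in $\mathcal{D}^{t}$ when $\sigma=t$.

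The second step is to assemble $R(D)$ out of these pieces. Because \emph{composition} is stated only for a two-fold split of disjoint buyer sets, I would apply it iteratively: peel off one buyer group at a time, checking at each stage that the partial union of the remaining groups is still a bona fide problem in $\mathcal{D}$ sharing the same $(M,P,\pi)$ and disjoint from the group being removed. A short induction on the number of nonempty groups then yields, for every $i\in M$,
\[
R_i(D)=\sum_{j=1}^{m}R_i\left(M,P,\overline{N}^{-j},\pi,\overline{C}^{-j}\right)+R_i\left(M,P,\overline{N}^{0},\pi,\overline{C}^{0}\right)+\sum_{t=1}^{s}R_i\left(M,P,\overline{N}^{t},\pi,\overline{C}^{t}\right),
\]
where each summand is evaluated at a problem belonging to one of the privileged subclasses.

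The final step is immediate: by hypothesis $R$ and $R'$ coincide on $\mathcal{D}^{-m},\dots,\mathcal{D}^{-1},\mathcal{D}^{0},\mathcal{D}^{1},\dots,\mathcal{D}^{s}$, so every summand above is unchanged when $R$ is replaced by $R'$. Recombining the pieces by the same iterated application of \emph{composition}, now to $R'$, returns $R'_i(D)$, and hence $R_i(D)=R'_i(D)$. Since $D\in\mathcal{D}$ and $i\in M$ were arbitrary, $R=R'$ on the whole domain.

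I do not expect any genuine obstacle here, as the content is essentially bookkeeping. The one point deserving care is precisely that \emph{composition} is a binary splitting axiom, so the decomposition into the $m+1+s$ single-pass-type problems must be justified by induction rather than invoked in a single stroke; one must confirm at each step that the intermediate regrouping of buyer sets remains admissible (in particular the at-least-one-visit convention and the disjointness required by the axiom) so that \emph{composition} legitimately applies.
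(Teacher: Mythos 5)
Your proposal is correct and follows essentially the same route as the paper: split the pass holders into their single-pass-type groups, apply \emph{composition} to decompose the problem into pieces lying in $\mathcal{D}^{-m},\dots,\mathcal{D}^{-1},\mathcal{D}^{0},\mathcal{D}^{1},\dots,\mathcal{D}^{s}$, invoke the assumed coincidence there, and recombine for $R'$. Your additional care in justifying the $(m+1+s)$-fold decomposition by iterating the binary axiom is a sound refinement of a step the paper performs in a single stroke, but it does not change the argument.
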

	
	\begin{lemma}\label{lemma.3.}
		If a rule $R$ satisfies dummy, then, for each $i \in M$ and each  $(M,P,N,\pi,C) \in \mathcal{D}^{-i}$,
		$$
		R_i (M,P,N,\pi,C) = \pi^{-i} |N^{-i}|
		$$
		\begin{proof}
			Let $i \in M$, and let $(M,P,N,\pi,C) \in \mathcal{D}^{-i}$. Notice that, any $j \in M\backslash \{i\}$ is dummy in this problem. The \emph{dummy} principle requires that 
			$R_j(M,P,N,\pi,C) = 0$. By definition of rule, it must then occur that $R_i(M,P,N,\pi,C)=\pi^{-i} |N^{-i}|$.
		\end{proof}
	\end{lemma}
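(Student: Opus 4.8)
The plan is to leverage two facts: that every museum other than $i$ has no visitors in a problem drawn from $\mathcal{D}^{-i}$, and that any rule must exhaust the total revenue $E$. Accordingly, I would first pin down $E$, then zero out all museums except $i$ via the dummy axiom, and finally recover $R_i$ from budget balance.

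First I would compute $E$ for a problem $(M,P,N,\pi,C) \in \mathcal{D}^{-i}$. By the definition of $\mathcal{D}^{-i}$, the only (possibly) nonempty pass-holder set is $N^{-i}$; the global set $N^0$, the consortium sets $N^1,\dots,N^s$, and the remaining individual sets $N^{-j}$ with $j \neq i$ are all empty. Hence the revenue formula $E = \sum_{i'=1}^m |N^{-i'}|\pi^{-i'} + |N^0|\pi^0 + \sum_{t=1}^s |N^t|\pi^t$ collapses to $E = \pi^{-i}|N^{-i}|$.

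Next I would verify that every $j \in M \setminus \{i\}$ is dummy, i.e. $j \in \NM(D)$. For such a $j$ one checks each relevant superindex $\sigma \in \{-j, 0, (j)\}$: the individual set $N^{-j}$ is empty, so $C^{-j} = \varnothing$; the global set $N^0$ is empty, so $C^0 = \varnothing$; and the consortium set $N^{(j)}$ (equal to some $N^t$) is empty, so $C^{(j)} = \varnothing$. In every case the defining condition of $\NM(D)$ is met, so $j$ is dummy, and the \emph{dummy} axiom forces $R_j(D) = 0$. Finally I would invoke the budget-balance requirement built into the definition of a rule, $\sum_{j \in M} R_j(D) = E$. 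Since all summands except $R_i(D)$ vanish, this yields $R_i(D) = E = \pi^{-i}|N^{-i}|$, as claimed.

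The argument presents no genuine obstacle; it is essentially bookkeeping once the reduction of $E$ and the dummy status of the other museums are recorded. The only point demanding mild care is confirming the dummy status of a museum $j$ lying in the same consortium as $i$: one must note that even the consortium pass $N^{(i)} = N^{(j)}$ is empty in $\mathcal{D}^{-i}$, so $j$ still receives no visitors through any channel. The degenerate case in which $P^{(i)} = \{i\}$ is a singleton — where the stated convention sets $N^{-i} = \varnothing$ — is handled vacuously, since then $E = 0$ and the asserted identity reads $0 = 0$.
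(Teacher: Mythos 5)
Your proposal is correct and follows essentially the same route as the paper's own proof: every museum $j \in M\backslash\{i\}$ is dummy in a problem from $\mathcal{D}^{-i}$, so the dummy axiom zeroes out their awards, and budget balance from the definition of a rule forces $R_i(D) = E = \pi^{-i}|N^{-i}|$. Your additional bookkeeping (collapsing the revenue formula, checking each superindex $\sigma \in \{-j,0,(j)\}$, and noting the degenerate singleton-consortium case) merely makes explicit what the paper leaves implicit.
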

	
	\begin{lemma}\label{lemma.1.}
		Let $(M,P,\{a\},\pi,C^{(a)}) \in \mathcal{D}^0$ be a problem with just one pass holder purchasing the general pass, and let $R$ be a rule that satisfies splitting-proofness of museums and dummy. If $\{i,j\} \subseteq M$ are  such that $P^{(i)}=P^{(j)}$, $C^{(a)}_{ia}=C^{(a)}_{ja}$ and $\pi^{-i}=\pi^{-j}$, then 
		$$
		R_i\left(M,P,\{a\},P,\pi, C^{(a)} \right)=R_j\left(M,P,\{a\},P,\pi, C^{(a)} \right).
		$$
		\begin{proof}
			Consider the problem $D=\left(M,P,\{a\},\pi, C^{(a)} \right) \in \mathcal{D}^0$ and two museums $\{i,j\} \subseteq N$ that fulfills the requirements in the statement, with $P_k=P^{(i)}=P^{(j)}$. If $C^{(a)}_{ia}=C^{(a)}_{ja}=0$, \emph{dummy} implies that $R_i(D)=R_j(D)=0$. If, instead, $C^{(a)}_{ia}=C^{(a)}_{ja}=1$, we can split museum $i$ into $\{i_1,i_2\}$ obtaining the problem $D^i=\left(M',P',\{a\},\pi', C'^{(a)} \right)$ where:
			\begin{itemize}
				\item $M'=\left(M\backslash\{i\}\right)\cup\{i_1,i_2\}$.
				\item $P'=\left(P\backslash P_k\right)\cup P'_k$ where $P'_k=\left(P_k\backslash\{i\}\right)\cup\{i_1,i_2\}$.
				\item $\pi'=\left(\pi\backslash\pi^{-i}\right)\cup\{\pi'^{-i_1},\pi'^{-i_2}\}$ where $\pi'^{-i_1}=\pi'^{-i_2}=\frac{\pi^{-i}}{2}$.
				\item $C'^{(a)}$ is such that $C'^{(a)}_{i_1a}=C'^{(a)}_{i_2a}=C^{(a)}_{ia}$, and $C'^{(a)}_{ja}=C^{(a)}_{ja}$  for any $j\in M\backslash\{i\}$
			\end{itemize}
			\emph{Splitting-proofness of museums} requires that  
			$$
			R_i(D)=R_{i_1}(D^i)+R_{i_2}(D^i).
			$$ 
			Similarly, we can split museum $j$ into $\{i_1,i_2\}$ obtaining the problem $D^j=\left(\overline{M},\overline{P},\{a\},\overline{\pi}, \overline{C}^{(a)} \right)$. Again, by \emph{splitting-proofness of museums},
			$$
			R_j(D)=R_{i_1}(D^j)+R_{i_2}(D^j).
			$$ 
			Now, in the problem $D^i$ we again split the museum $j$ into $\{j_1,j_2\}$ obtaining the problem $D^{ij}=\left(\overline{M}',\overline{P}',\{a\},\overline{\pi}', \overline{C}'^{(a)} \right)$ where:
			\begin{itemize}
				\item $\overline{M}'=\left(M'\backslash\{j\}\right)\cup\{j_1,j_2\}$.
				\item $\overline{P}'=\left(P'\backslash P'_k\right)\cup \overline{P}'_k$ where $\overline{P}'_k=\left(P'_k\backslash\{j\}\right)\cup\{j_1,j_2\}$.
				\item $\overline{\pi}'=\left(\pi'\backslash\pi'^{-j}\right)\cup\{\overline{\pi}'^{-j_1},\overline{\pi}'^{-j_2}\}$ where $\overline{\pi}'^{-j_1}=\overline{\pi}'^{-j_2}=\frac{\pi^{-j}}{2}$.
				\item $\overline{C}'^{(a)}$ is such that $\overline{C}'^{(a)}_{j_1a}=C''^{(a)}_{j_2a}=C'^{(a)}_{ja}$, and $\overline{C}'^{(a)}_{la}=C'^{(a)}_{la}$ for any $l\in M'\backslash\{j\}$
			\end{itemize}
			In application of \emph{splitting-proofness of museums}, $R_{i_1}(D^i)=R_{i_1}(D^{ij})$ and $R_{i_2}(D^i)=R_{i_2}(D^{ij})$. Then,
			$$
			R_i(D)=R_{i_1}(D^i)+R_{i_2}(D^i)=R_{i_1}(D^{ij})+R_{i_2}(D^{ij})
			$$
			Using an analogous argument for the problem $D^j$, we split museum $i$ into $\{j_1,j_2\}$ and obtain that 
			$$
			R_j(D)=R_{i_1}(D^j)+R_{i_2}(D^j)=R_{i_1}(D^{ji})+R_{i_2}(D^{ji})
			$$
			As $D^{ij}=D^{ji}$, we have that $R_{i_1}(D^{ij})=R_{i_1}(D^{ji})$ and $R_{i_2}(D^{ij})=R_{i_2}(D^{ji})$. And therefore, $R_i(D)=R_j(D)$.
		\end{proof}
	\end{lemma}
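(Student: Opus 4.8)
The plan is to derive this symmetry property using only \emph{dummy} and \emph{splitting-proofness of museums}, via a ``mutual splitting into equal halves'' argument. First I would dispose of the trivial case: if $C^{(a)}_{ia}=C^{(a)}_{ja}=0$, then neither $i$ nor $j$ is visited, so both belong to $\NM(D)$ and \emph{dummy} immediately gives $R_i(D)=R_j(D)=0$. Since the hypothesis forces $C^{(a)}_{ia}=C^{(a)}_{ja}$, the only remaining case is $C^{(a)}_{ia}=C^{(a)}_{ja}=1$, where the real work lies. Note that throughout we stay inside $\mathcal{D}^0$: as $N^{-i}=\varnothing$ there, every museum we create has an empty individual-pass slot, so the split problems are genuine elements of $\mathcal{D}^0$.

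For the nontrivial case, the idea is to split each museum into two equal pieces and exploit the fact that $i$ and $j$ share their price and their consumption. Concretely, I would first split $i$ into two museums $\{i_1,i_2\}$, each priced $\pi^{-i}/2$, each inheriting $i$'s (unit) column of $C^{(a)}$, and both placed in $P^{(i)}$; call the resulting problem $D^i$. Since splitting preserves the total revenue $E$ (because $\pi^{-i_1}+\pi^{-i_2}=\pi^{-i}$ and the general pass is untouched), while \emph{splitting-proofness of museums} freezes the award of every museum in $M\setminus\{i\}$, efficiency forces $R_i(D)=R_{i_1}(D^i)+R_{i_2}(D^i)$. Performing the symmetric construction on $j$ — splitting it into two museums carrying the \emph{same} labels $\{i_1,i_2\}$ — yields a problem $D^j$ with $R_j(D)=R_{i_1}(D^j)+R_{i_2}(D^j)$.

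Next I would iterate the splitting: inside $D^i$, split $j$ into $\{j_1,j_2\}$ to obtain $D^{ij}$, and inside $D^j$, split $i$ into $\{j_1,j_2\}$ to obtain $D^{ji}$. In each of these steps the museums $i_1,i_2$ are mere bystanders, so \emph{splitting-proofness of museums} gives $R_{i_1}(D^i)=R_{i_1}(D^{ij})$ and $R_{i_2}(D^i)=R_{i_2}(D^{ij})$, and likewise $R_{i_1}(D^j)=R_{i_1}(D^{ji})$ and $R_{i_2}(D^j)=R_{i_2}(D^{ji})$. The crux of the argument — and the step I expect to be the main obstacle — is verifying that $D^{ij}$ and $D^{ji}$ are literally the \emph{same} problem. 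In both, the consortium $P^{(i)}=P^{(j)}$ now contains four museums $\{i_1,i_2,j_1,j_2\}$, each priced $\pi^{-i}/2=\pi^{-j}/2$ and each visited by $a$, while the rest of the tuple is untouched. The equalities $\pi^{-i}=\pi^{-j}$, the common consortium, and the common consumption value $1$ are precisely what make the two splitting orders produce identical data, so the labels $i_1,i_2,j_1,j_2$ describe four indistinguishable museums in either order.

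Finally, chaining the identities closes the proof: $R_i(D)=R_{i_1}(D^{ij})+R_{i_2}(D^{ij})=R_{i_1}(D^{ji})+R_{i_2}(D^{ji})=R_j(D)$, where the middle equality is exactly $D^{ij}=D^{ji}$. Beyond that key equality, the only care needed is in the bookkeeping of the split tuples — prices summing back to the parent's price, each new consumption matrix inheriting the parent's column, and consortium membership being preserved — which is routine once the symmetry of $i$ and $j$ has been exploited.
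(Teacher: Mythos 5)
Your proposal is correct and follows essentially the same route as the paper's own proof: the same mutual-splitting construction into equal halves $\{i_1,i_2\}$ and $\{j_1,j_2\}$, the same cross-splitting to reach $D^{ij}$ and $D^{ji}$, and the same key identification $D^{ij}=D^{ji}$ (enabled by $\pi^{-i}=\pi^{-j}$ and the shared consumption column) to chain the equalities. Your explicit appeal to budget balance to justify $R_i(D)=R_{i_1}(D^i)+R_{i_2}(D^i)$ is a small point of added care that the paper leaves implicit, but it does not change the argument.
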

	
	\begin{lemma}\label{lemma.2.}
		Let $(M,P,\{a\},\pi,C^{(a)}) \in \mathcal{D}^0$ be a problem with just one pass holder purchasing the general pass, and let $R$ be a rule that satisfies splitting-proofness of consortia, dummy and consortia consistency. If $P_k,P_r\in P$ are such that $\sum_{i\in P_k}C^{(a)}_{ia}=0\Leftrightarrow \sum_{j\in P_r}C^{(a)}_{ja}=0$ and $\pi^{k}=\pi^{r}$, then 
		$$
		\sum_{i\in P_k}R_i\left(M,P,\{a\},P,\pi, C^{(a)} \right)=\sum_{j\in P_r}R_j\left(M,P,\{a\},P,\pi, C^{(a)} \right).
		$$
		\begin{proof}
			Consider the problem $D=\left(M,P,\{a\},\pi, C^{(a)} \right) \in \mathcal{D}^0$ and its associated reduced problem $D^s=\left(\{1,\dots,s\},\hat{P}^s,\{a\},\hat{\pi}, \hat{C}^{(a)} \right) \in \mathcal{D}^0$. Let $\{k,r \} \subseteq \{1,\dots,s\}$ be a pair of consortia such that $\hat{C}^{(a)}_{ka}=\hat{C}^{(a)}_{ra}$ and $\pi^{k}=\pi^{r}$. Reasoning as in Lemma \ref{lemma.1.} with \emph{splitting-proofness of consortia} instead of \emph{splitting-proofness of museums}, we obtain that $R_k(D^s)=R_r(D^s)$. Finally, by \emph{consortia consistency}, we conclude that 
			$$
			\sum_{i\in P_k}R_i\left(M,P,\{a\},P,\pi, C^{(a)} \right)=R_k\left(D^s \right)=R_r\left(D^s\right)=\sum_{j\in P_r}R_j\left(M,P,\{a\},P,\pi, C^{(a)} \right).
			$$
		\end{proof}
	\end{lemma}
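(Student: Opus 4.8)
The plan is to leverage \emph{consortia consistency} to transport the question into the \emph{reduced problem}, where the two consortia $P_k$ and $P_r$ become two ordinary (singleton) museums, and then to run the symmetry argument of Lemma~\ref{lemma.1.} at the consortium level. Concretely, write $D = (M,P,\{a\},\pi,C^{(a)})$ and let $D^s = (\{1,\dots,s\},\hat{P}^s,\{a\},\hat{\pi},\hat{C}^{(a)})$ be its associated reduced problem. By \emph{consortia consistency}, $\sum_{i\in P_k}R_i(D)=R_k(D^s)$ and $\sum_{j\in P_r}R_j(D)=R_r(D^s)$, so it suffices to prove the single equality $R_k(D^s)=R_r(D^s)$ in the reduced problem.

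First I would check that $k$ and $r$ are symmetric in $D^s$. By the definition of the reduced problem, $\hat{\pi}^k=\pi^k=\pi^r=\hat{\pi}^r$, and $\hat{C}^{(a)}_{ta}=1$ precisely when some museum of $P_t$ is visited by $a$, i.e. when $\sum_{i\in P_t}C^{(a)}_{ia}>0$. The hypothesis $\sum_{i\in P_k}C^{(a)}_{ia}=0\Leftrightarrow\sum_{j\in P_r}C^{(a)}_{ja}=0$ therefore yields $\hat{C}^{(a)}_{ka}=\hat{C}^{(a)}_{ra}$. Thus the collapsed museums $k$ and $r$ share the same price and are visited by the very same single pass holder, which is exactly the configuration treated in Lemma~\ref{lemma.1.}.

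Next I would replay the argument of Lemma~\ref{lemma.1.}, replacing \emph{splitting-proofness of museums} by \emph{splitting-proofness of consortia}. If neither consortium is visited, then $\hat{C}^{(a)}_{ka}=\hat{C}^{(a)}_{ra}=0$, both are dummy in $D^s$, and \emph{dummy} gives $R_k(D^s)=R_r(D^s)=0$. Otherwise both are visited; splitting the singleton consortium $\{k\}$ into two equal-priced halves and doing the same to $\{r\}$, and then cross-splitting exactly as in Lemma~\ref{lemma.1.}, produces in both cases one and the same problem (the equal-priced, equally-visited configuration is invariant under relabelling), which forces the two shares to coincide. Here \emph{splitting-proofness of consortia} is precisely what guarantees that splitting a consortium leaves the aggregate shares of the remaining consortia unchanged, so the bookkeeping mirrors the museum-level proof line by line.

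Finally I would chain the equalities $\sum_{i\in P_k}R_i(D)=R_k(D^s)=R_r(D^s)=\sum_{j\in P_r}R_j(D)$ to close the argument. I expect the main obstacle to be verificational rather than conceptual: one must confirm that the singleton-consortium price convention $\hat{\pi}^{-t}=\hat{\pi}^t=\pi^t$ is respected by the successive splits, so that every intermediate problem remains a legitimate member of $\mathcal{D}^0$, and that the consumption rows behave under consortium splitting exactly as the museum rows did in Lemma~\ref{lemma.1.}. Granting these, the reduction to Lemma~\ref{lemma.1.} is clean and no genuinely new idea is required.
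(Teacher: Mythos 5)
Your proposal is correct and follows essentially the same route as the paper: pass to the reduced problem, verify that the collapsed museums $k$ and $r$ satisfy the symmetry hypotheses there, replay the splitting argument of Lemma~\ref{lemma.1.} with \emph{splitting-proofness of consortia} in place of \emph{splitting-proofness of museums} (using \emph{dummy} for the unvisited case), and transfer the resulting equality $R_k(D^s)=R_r(D^s)$ back via \emph{consortia consistency}. The paper's proof is exactly this, stated more tersely.
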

	
	\begin{lemma}\label{lemma.5.}
		Let $(M,P,\{a\},\pi,C^{(a)}) \in \mathcal{D}^{0}$ be a problem with just one pass holder purchasing the general pass, and let $R$ be a rule that satisfies splitting-proofness of consortia. If $P_k,P_r\in P$ are such that $k,r\in K^0_a$ and $\pi^{k}\geq \pi^{r}$, then 
		$$
		R_k\left(\{1,...,s\},\hat{P}^s,\hat{N},\hat{\pi},\hat{C}\right)\geq R_r\left(\{1,...,s\},\hat{P}^s,\hat{N},\hat{\pi},\hat{C}\right).
		$$
		where $\left(\{1,...,s\},\hat{P}^s,\hat{N},\hat{\pi},\hat{C}\right)\in \mathcal{D}^0$ is the corresponding reduced problem.
	\end{lemma}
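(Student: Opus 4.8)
The plan is to use splitting-proofness of consortia to decompose the allocation of the more expensive consortium into a part that is symmetric to the cheaper one, plus a nonnegative remainder. All reasoning takes place in the reduced problem $\left(\{1,...,s\},\hat{P}^s,\hat{N},\hat{\pi},\hat{C}\right)$, where each original consortium $t$ appears as a singleton consortium (a single ``museum'') with price $\hat{\pi}^t=\pi^t$, visited by $a$ if and only if $t\in K^0_a$. Two elementary observations set the stage: since only the general pass is purchased, the total revenue is $\pi^0$, so $\sum_{t=1}^s R_t\left(\{1,...,s\},\hat{P}^s,\hat{N},\hat{\pi},\hat{C}\right)=\pi^0$; and every component of $R$ is nonnegative because a rule maps into $\mathbb{R}^s_+$.

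First I would dispose of the trivial case $\pi^k=\pi^r$. Here $k$ and $r$ are singleton consortia, both visited by $a$ (as $k,r\in K^0_a$), with equal price. Reasoning exactly as in Lemma \ref{lemma.1.}, but invoking splitting-proofness of consortia in place of splitting-proofness of museums (the dummy subcase never arises, since both are visited), one obtains $R_k=R_r$, giving the desired inequality with equality.

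For the main case $\pi^k>\pi^r$, I would split the consortium $k$ into two singleton consortia $k_1$ and $k_2$ with prices $\pi^{k_1}=\pi^r$ and $\pi^{k_2}=\pi^k-\pi^r>0$, both inheriting the general-pass visit pattern of $k$ and hence both visited by $a$. Splitting-proofness of consortia leaves the allocation of every other singleton consortium unchanged, in particular $R_r$. Since splitting consortia alters neither $N^0$ nor $\pi^0$, the total revenue is still $\pi^0$, so the conservation identity forces $R_{k_1}+R_{k_2}=R_k$. Moreover, $k_1$ and $r$ are now symmetric singleton consortia (both visited, both priced $\pi^r$), so the Lemma \ref{lemma.1.}-style symmetry argument again yields $R_{k_1}=R_r$. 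Combining these, $R_k=R_{k_1}+R_{k_2}=R_r+R_{k_2}\ge R_r$, where the final step uses $R_{k_2}\ge 0$.

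The only delicate point is justifying the two symmetry claims ($R_k=R_r$ when the prices coincide, and $R_{k_1}=R_r$ after the split) from splitting-proofness of consortia alone, \emph{without} appealing to Lemma \ref{lemma.2.}, which additionally assumes dummy and consortia consistency. This is legitimate because the symmetry of equally-priced, equally-visited singleton consortia in the reduced problem is precisely the content of the Lemma \ref{lemma.1.} computation transcribed with consortia in place of museums, and once the visited case is isolated that computation invokes only the splitting axiom. Everything else---revenue conservation and nonnegativity---is immediate from the definition of a rule, so no genuine obstacle remains beyond setting up the price-matched split correctly.
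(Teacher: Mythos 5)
Your proof is correct and takes essentially the same route as the paper: split consortium $k$ in the reduced problem into pieces priced $\pi^{r}$ and $\pi^{k}-\pi^{r}$, use budget balance plus splitting-proofness of consortia to get $R_{k}=R_{k_1}+R_{k_2}$, identify $R_{k_1}=R_{r}$ by the Lemma~\ref{lemma.1.}-style symmetry argument transcribed to consortia, and conclude by non-negativity. If anything, you are slightly more careful than the paper, which cites Lemma~\ref{lemma.2.} (whose hypotheses include dummy and consortia consistency, not granted in Lemma~\ref{lemma.5.}) rather than isolating the splitting-only symmetry step, and which does not treat the degenerate case $\pi^{k}=\pi^{r}$, where its split would create a consortium of price zero, disallowed since prices lie in $\mathbb{R}_{++}$.
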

	\begin{proof}
		Let $(M,P,\{a\},\pi,C^{(a)}) \in \mathcal{D}^{0}$ be a problem with just one pass holder purchasing the general pass. Consider its reduced problem $D=\left(\{1,...,s\},\hat{P}^s,\hat{N},\hat{\pi},\hat{C}\right)$. Let $k,r\in K^0_a$ with $\pi^{k}\geq \pi^{r}$. Let $D'$ be the problem in which consortium $k$ splits into two new consortia $k'$ and $k''$ such that 
		$$
		\pi^{k'}=\pi^{-k'}=\pi^r \text{ and } \pi^{k''}=\pi^{-k''}=\pi^{k}-\pi^{r}
		$$
		As $R$ satisfies \emph{splitting-proofness of consortia}
		$$
		R_{k'}\left(D'\right)\leq R_{k'}\left(D'\right)+R_{k''}\left(D'\right)=R_{k}\left(D\right).
		$$
		By Lemma \ref{lemma.2.}
		$$
		R_{k'}\left(D'\right)=R_{r}\left(D'\right)=R_{r}\left(D\right).
		$$
		Therefore, $R_{r}\left(D\right)\leq R_{k}\left(D\right)$.
	\end{proof}

	\begin{lemma}\label{lemma.6.}
		Let $(M,P,\{a\},\pi,C^{(a)}) \in \mathcal{D}^{\sigma}$ be a problem with just one pass holder where $\sigma\in\{0,k\}$ with $k\in\{1,\dots,s\}$, and let $R$ be a rule that satisfies splitting-proofness of museums. If $i,j\in P^{k}$ are such that $C^{(a)}_{ia}=C^{(a)}_{ja}$ and $\pi^{-i}\geq \pi^{-j}$, then 
		$$
		R_i\left(M,P,\{a\},P,\pi, C^{(a)} \right)\geq R_j\left(M,P,\{a\},P,\pi, C^{(a)} \right).
		$$
	\end{lemma}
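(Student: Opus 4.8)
The plan is to imitate the proof of Lemma~\ref{lemma.5.} with the roles of consortia and museums interchanged: where that proof splits a consortium and invokes \emph{splitting-proofness of consortia} together with Lemma~\ref{lemma.2.}, I will split a museum and invoke \emph{splitting-proofness of museums} together with the museum-level symmetry of Lemma~\ref{lemma.1.}. Fix $D=(M,P,\{a\},\pi,C^{(a)})\in\mathcal{D}^{\sigma}$ and $i,j\in P_k$ as in the statement. I first dispose of the degenerate case $\pi^{-i}=\pi^{-j}$: then $i$ and $j$ share a consortium, the same visit pattern, and the same individual price, so Lemma~\ref{lemma.1.} gives $R_i(D)=R_j(D)$ and the desired inequality holds with equality. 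Henceforth assume $\pi^{-i}>\pi^{-j}$.

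The main construction is to split the more expensive museum $i$ into two museums $i'$ and $i''$, kept inside the same consortium $P_k$, with individual prices $\pi^{-i'}=\pi^{-j}$ and $\pi^{-i''}=\pi^{-i}-\pi^{-j}>0$, both inheriting the visit pattern of $i$; call the resulting problem $D'\in\mathcal{D}$. Because the aggregate revenue $E$ is unchanged under this split and \emph{splitting-proofness of museums} freezes the award of every museum other than $i$, the budget-balance identity $\sum_{l}R_l=E$ forces
$$
R_i(D)=R_{i'}(D')+R_{i''}(D'),
$$
and non-negativity of the rule then yields $R_{i'}(D')\leq R_i(D)$. To finish, note that in $D'$ the museums $i'$ and $j$ lie in the common consortium $P_k$, satisfy $C^{(a)}_{i'a}=C^{(a)}_{ia}=C^{(a)}_{ja}$, and carry equal individual prices $\pi^{-i'}=\pi^{-j}$; hence Lemma~\ref{lemma.1.} gives $R_{i'}(D')=R_j(D')$, while \emph{splitting-proofness of museums} (applied to museum $j\neq i$) gives $R_j(D')=R_j(D)$. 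Combining, $R_j(D)=R_{i'}(D')\leq R_i(D)$, as required.

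The one delicate point, and the step I expect to be the main obstacle, is that Lemma~\ref{lemma.1.} is phrased for the subclass $\mathcal{D}^0$, whereas here $\sigma$ may equal $k$. I would resolve this by observing that the double-splitting symmetry argument behind Lemma~\ref{lemma.1.} never uses the identity of the purchased pass: the two museums sit in the same consortium, all splittings are carried out within that consortium, and splitting $i$ first and then $j$ yields exactly the same problem as splitting $j$ first and then $i$. Consequently the symmetry conclusion holds verbatim in $\mathcal{D}^{k}$ as well, so both the degenerate-case and the main uses of Lemma~\ref{lemma.1.} are legitimate for either value of $\sigma$, and the argument applies uniformly.
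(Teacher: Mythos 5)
Your proposal is correct and takes essentially the same route as the paper's own proof: split the more expensive museum $i$ into $i'$ and $i''$ with $\pi^{-i'}=\pi^{-j}$ and $\pi^{-i''}=\pi^{-i}-\pi^{-j}$, use splitting-proofness of museums (together with budget balance and non-negativity) to get $R_{i'}(D')\leq R_{i'}(D')+R_{i''}(D')=R_i(D)$, and then apply Lemma \ref{lemma.1.} to conclude $R_{i'}(D')=R_j(D')=R_j(D)$. Your two refinements---treating the case $\pi^{-i}=\pi^{-j}$ separately so that the split price $\pi^{-i''}$ stays strictly positive (as required since prices lie in $\mathbb{R}_{++}$), and observing that the double-splitting argument of Lemma \ref{lemma.1.} never uses which pass was purchased, so it remains valid when $\sigma=k$---address gaps the paper's own proof silently glosses over.
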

	\begin{proof}
		Let $D=(M,P,\{a\},\pi,C^{(a)}) \in \mathcal{D}^{\sigma}$ be a problem with just one pass holder where $\sigma\in\{0,k\}$ with $k\in\{1,\dots,s\}$. Consider $i,j\in P^{k}$ such that $C^{(a)}_{ia}=C^{(a)}_{ja}$ and $\pi^{-i}\geq \pi^{-j}$. Let $D'$ be the problem where museum $i$ splits into two new museums $i'$ and $i''$ such that
		$$
		\pi^{-i'}=\pi^{-j} \text{ and } \pi^{-i''}=\pi^{-i}-\pi^{-j}.
		$$
		As $R$ satisfies \emph{splitting-proofness of museums}
		$$
		R_{i'}\left(D'\right)\leq R_{i'}\left(D'\right)+R_{i''}\left(D'\right)=R_{i}\left(D\right).
		$$
		By Lemma \ref{lemma.1.}
		$$
		R_{i'}\left(D'\right)=R_{j}\left(D'\right)=R_{j}\left(D\right).
		$$
		Therefore, $R_{j}\left(D\right)\leq R_{i}\left(D\right)$.
	\end{proof}

	\newpage
	
	%\bibliography{/Users/RicardoM/Dropbox/documen/latex/libis/libis.bib}
	%\bibliographystyle{/Users/RicardoM/Dropbox/documen/latex/libis/mystyle3}
	%\bibliography{libis.bib}
	%\bibliographystyle{mystyle3}

\end{document}